\documentclass[11pt]{article}

\usepackage[margin=1in]{geometry}

\usepackage[T1]{fontenc}
\usepackage[utf8]{inputenc}
\usepackage{mathpazo} 
\usepackage{amsfonts}
\usepackage{amsmath, amssymb, amsthm,mathtools}

\usepackage{soul}

\usepackage{nicefrac}

\usepackage[singlespacing]{setspace}

\usepackage{algorithm}
\usepackage[noend]{algpseudocode}
\usepackage{graphicx}
\usepackage{xcolor}
\usepackage{array}
\usepackage[shortlabels]{enumitem}
\usepackage{comment}
\usepackage{float}
\usepackage[backgroundcolor=gray!10,textsize=footnotesize]{todonotes}

\usepackage{standalone} % REQUIRED to link the figure files
\usepackage{tikz}
\usetikzlibrary{decorations.pathreplacing, decorations.markings, arrows.meta, calc, quotes, shapes.geometric}
\usepackage{pgfplots}
\pgfplotsset{compat=newest}
\graphicspath{{Figures/}}

\usepackage{hyperref}
\usepackage{cleveref} 

\usepackage[style=alphabetic,backend=biber,maxbibnames=10]{biblatex}

\newtheorem{theorem}{Theorem}
\newtheorem{corollary}[theorem]{Corollary}
\newtheorem{lemma}[theorem]{Lemma}
\newtheorem*{conjecture}{Conjecture}
\newtheorem{remark}[theorem]{Remark}
\newtheorem{observation}[theorem]{Observation}

\crefname{lemma}{Lemma}{Lemmas}

\newcommand{\tot}{\mathrm{tot}}
\newcommand{\avg}{\mathrm{avg}}

\title{\huge \textbf{Unsplittable Transshipments}\footnote{The first and the last author are supported by the Deutsche Forschungsgemeinschaft (DFG, German Research
Foundation) under Germany´s Excellence Strategy – The Berlin Mathematics
Research Center MATH+ (EXC-2046/1, EXC-2046/2, project ID: 390685689).}}
\author{Srinwanti Debgupta\footnote{Institute of Mathematics, TU Berlin, Germany, \texttt{debgupta@math.tu-berlin.de}} \and Sarah Morell\footnote{Faculty of Mathematics and Computer Science, University of Bremen, Germany, \texttt{morell@uni-bremen.de}} \and Martin Skutella\footnote{Institute of Mathematics, TU Berlin, Germany, \texttt{martin.skutella@tu-berlin.de}}} % Removes the space usually reserved for authors
\date{February 6, 2026}   % Removes the space usually reserved for date

\usepackage[title]{appendix}

\begin{document}

%----------
% Title page
%----------

\clearpage
\maketitle

\begin{abstract}
We introduce the \emph{Unsplittable Transshipment Problem} in directed graphs with multiple sources and sinks. An unsplittable transshipment routes given supplies and demands using at most one path for each source--sink pair. Although they are a natural generalization of single source unsplittable flows, unsplittable transshipments raise interesting new challenges and require novel algorithmic techniques. As our main contribution, we give a nontrivial generalization of a seminal result of Dinitz, Garg, and Goemans (1999) by showing how to efficiently turn a given transshipment~$x$ into an unsplittable transshipment~$y$ with~$y_a<x_a+d_{\max}$ for all arcs~$a$, where~$d_{\max}$ is the maximum demand (or supply) value. Further results include bounds on the number of rounds required to satisfy all demands, where each round consists of an unsplittable transshipment that routes a subset of the demands while respecting arc capacity constraints.
\end{abstract}

\setcounter{page}{1}

%----------
% Section 1
%----------

\section{Introduction}
\label{sec:introduction}

Network flows constitute one of the most fundamental classes of problems in combinatorial optimization and mathematical programming. We refer to the historical book by Ford and Fulkerson~\cite{FordFulkerson62}, the classical textbook by Ahuja, Magnanti, and Orlin~\cite{Ahuja:1993:NFT:137406}, as well as the more recent textbook by Williamson~\cite{williamson2019network} for comprehensive treatments of the subject.

Starting with the pioneering work of Kleinberg~\cite{Kleinberg_PhD,kleinberg1996single}, the past three decades have witnessed substantial progress in the study of unsplittable flows, in which the demand of each commodity must be routed along a single path from its source to its sink subject to given arc capacity constraints. Unsplittable flows are crucial in applications such as logistics, traffic, and telecommunication networks, where splitting a commodity across multiple paths may degrade quality of service, increase operational complexity, or be infeasible altogether, as exemplified by optical networks requiring specialized equipment for path splitting and freight logistics where dividing loads across routes is often impractical.

The most general multicommodity version of the problem, in which each commodity has its own source and sink, generalizes the arc-disjoint paths problem and is notoriously difficult to approximate. If the maximum demand is smaller than the minimum arc capacity, the randomized rounding technique of Raghavan and Thompson~\cite{Raghavan1987} achieves a logarithmic approximation factor for minimizing congestion, while for densely embedded unit-capacity graphs, Kleinberg~\cite{Kleinberg_PhD} obtains a constant-factor approximation with high probability. We refer to the survey by Kolliopoulos~\cite{Kolliopoulos_survey} for an overview of results on general unsplittable flows. The difficulty of the general problem has motivated extensive study of the special single source case, in which all commodities share a common source. In this setting, the underlying problem of computing a fractional (i.e., not necessarily unsplittable) flow that satisfies all demands reduces to a classical single commodity flow problem and can be solved via a maximum flow computation.

\paragraph{Unsplittable transshipments.}
We study unsplittable transshipments as an extension of single source unsplittable flows to networks with multiple sources and sinks and prescribed supplies and demands, where for each source--sink pair flow is routed along at most one path. More precisely, we are given a directed graph~$D = (V,A)$ together with a vertex balance function~$b\colon V \to \mathbb{R}$ satisfying~$\sum_{v \in V} b(v) = 0$. A vertex~$v$ is called a \emph{source} if~$b(v) > 0$, and the set of all sources is denoted by~$S^+$. Similarly, a vertex~$v$ is called a \emph{sink} if~$b(v) < 0$, and the set of all sinks is denoted by~$S^-$. A flow~$x\in\mathbb{R}^A_{\geq0}$ is a \emph{$b$-transshipment} if
\[
\sum_{a \in \delta^{\text{out}}(v)}x_a- \sum_{a \in \delta^{\text{in}}(v)} x_a = b(v) \qquad \text{for all~$v \in V$,}
\]
where~$\delta^{\text{in}}(v)$ and~$\delta^{\text{out}}(v)$ denote the sets of incoming and outgoing arcs of vertex~$v$, respectively. Given arc capacities~$c\in\mathbb{R}_{\geq 0}^A$, a~$b$-transshipment~$x$ is \emph{feasible} if~$x_a \leq c_a$ for all~$a \in A$. As for the special single source case, finding a feasible~$b$-transshipment reduces to a maximum flow computation. An \emph{unsplittable~$b$-transshipment} is a~$b$-transshipment together with a path decomposition consisting of source--sink paths only, and with at most one path for each source--sink pair. In order to emphasize the fact that a particular transshipment~$x$ is not necessarily unsplittable, we sometimes refer to~$x$ as a \emph{fractional} transshipment.

\paragraph{Single source unsplittable flows.}
The special case of unsplittable transshipments with a single source, that is,~$S^+ = \{s\}$, coincides with the well-studied Single Source Unsplittable Flows (SSUF) problem. The rich literature on SSUF mostly refers to the demand values at sinks as~$d_t\coloneqq |b(t)|$ for~$t\in S^-$. Arguably the most influential work on SSUF is due to Dinitz, Garg, and Goemans~\cite{DGG}. Their central result shows that any fractional flow~$x$ satisfying all demands can be efficiently transformed into a single source unsplittable flow~$y$ such that
\begin{equation*}
    y_a < x_a + d_{\max} \qquad \text{for all arcs } a \in A,
\end{equation*}
where~$d_{\max} \coloneqq \max_{t \in S^-} d_t$ denotes the maximum demand of any sink. In particular, in a capacitated single source network in which all arc capacities are at least~$d_{\max}$ (so that any commodity can be routed along any arc) this result immediately yields a~$2$-approximation algorithm for finding an unsplittable flow that minimizes congestion. The result has also inspired a substantial body of subsequent research, ranging from contributions by Kolliopoulos and Stein~\cite{Kolliopoulos_3_2_SSUF} and Skutella~\cite{Skutella_Apprx_SSUF_3_1} in the early 2000s, to recent work by Morell and Skutella~\cite{MoSk_2022}, Traub, Vargas Koch, and Zenklusen~\cite{traub2024single}, Majthoub Almoghrabi, Skutella, and Warode~\cite{majthoub2025integer}, as well as Swamy, Traub, Vargas Koch, and Zenklusen~\cite{swamy2026unsplittable}.

Beyond the congestion-minimization problem discussed above, Kleinberg~\cite{kleinberg1996single} introduces further variants of unsplittable flow problems. The \emph{minimum number of rounds} problem asks for a partition of the set of sinks into a minimum number of subsets (rounds), together with a feasible unsplittable flow for each subset. Building upon their main result, Dinitz, Garg, and Goemans~\cite{DGG} obtain a~$5$-approximation algorithm for this problem, improving upon earlier approximation results by Kleinberg~\cite{kleinberg1996single} and Kolliopoulos and Stein~\cite{Kolliopoulos_3_2_SSUF}. In particular, if a feasible fractional flow satisfying all demands exists, then all commodities can be routed unsplittably in five rounds, and there are instances for which three rounds are necessary~\cite{DGG}.

Another variant of unsplittable flows is the \emph{maximum routable demand} problem, which seeks a feasible unsplittable flow for a subset of demands that maximizes the total routed demand. Again improving upon earlier results by Kleinberg~\cite{kleinberg1996single} and Kolliopoulos and Stein~\cite{Kolliopoulos_3_2_SSUF}, Dinitz, Garg, and Goemans~\cite{DGG} present a~$0.226$-approximation algorithm for this problem. They also exhibit an instance that admits a feasible fractional solution but for which at most a~$0.385$ fraction of the total demand can be routed unsplittably.

Beyond the single source case and the results discussed above, unsplittable flows have been studied extensively in more general settings and under a variety of objective functions. For a broader overview of this literature, we refer to the survey by Kolliopoulos~\cite{Kolliopoulos_survey} and to the work of Grandoni, Mömke, and Wiese~\cite{grandoni2022ptas}.

\paragraph{Contribution and outline.}
We introduce the \emph{Unsplittable Transshipment Problem}, a natural generalization of SSUF to networks with multiple sources and sinks and prescribed supplies and demands. Our main contribution is a nontrivial extension of the seminal result of Dinitz, Garg, and Goemans~\cite{DGG}. We show how to efficiently transform any feasible fractional transshipment into an unsplittable transshipment while increasing the flow on each arc by at most the maximum demand. Compared to SSUF, unsplittable transshipments are more challenging since, in addition to selecting suitable paths for source--sink pairs, one must also determine appropriate flow values on these paths so that both supplies and demands are satisfied exactly. For example, even in networks with unit arc capacities and integral supplies and demands, deciding whether a feasible unsplittable transshipment exists is \textsf{NP}-hard; see Appendix~\ref{app:hardness}. Moreover, even when all supplies, demands, and arc capacities are integral, feasible unsplittable transshipments need not be integral; see Appendix~\ref{app:non-integrality}.

Our main algorithm builds upon the DGG Algorithm and augments it with a carefully designed mechanism for splitting sinks and their demands into sub-sinks that are routed to distinct sources, thereby ensuring feasibility of the resulting unsplittable transshipment. Moreover, the solutions produced by our algorithm exhibit strong structural properties: the induced source--sink bipartite graph, where a source~$s$ is connected to a sink~$t$ if the unsplittable transshipment routes flow along an~$s$--$t$ path, is acyclic, yielding a tree-like structure. Moreover, the flow entering each sink~$t$ is confluent, meaning that any two paths ending at~$t$ that meet at some vertex coincide from that point onward until they reach~$t$.

Beyond congestion, we also investigate the \emph{minimum number of rounds} and the \emph{maximum routable demand} objectives. Also these variants turn out to be more intricate for unsplittable transshipments than in the single source setting. In particular, we derive constant bounds on the number of rounds only under additional structural assumptions, and we obtain such results only in special cases.

The paper is organized as follows. In Section~\ref{sec:original_DGG}, we review the DGG Algorithm for single source unsplittable flows and highlight the key ideas underlying its analysis. Section~\ref{sec:modified_DGG} presents our Modified DGG Algorithm for unsplittable transshipments and proves the main structural and approximation results. In Section~\ref{sec:beyond_congestion}, we investigate routing in rounds and the maximum routable demand problem under additional assumptions. We conclude in Section~\ref{sec:conclusion} by highlighting several open questions for future research.

%----------
% Section 2
%----------

\section{A short review of the Dinitz-Garg-Goemans Algorithm for SSUFs}
\label{sec:original_DGG}

In the following, we summarize the algorithm of Dinitz, Garg, and Goemans~\cite{DGG} (the \emph{DGG Algorithm}) and highlight key ideas underlying its analysis. Subsequently, in Section~\ref{sec:modified_DGG}, we present an extended version of the DGG Algorithm for the Unsplittable Transshipment Problem.

\subsection{Description of the DGG Algorithm}

We consider an instance of the SSUF problem on an acyclic digraph~$D = (V,A)$ with a single source~$s$, a set of sinks~$S^-$ with associated demands~$d_t$ for~$t \in S^-$, and a flow~$x$ that satisfies these demands. The DGG Algorithm iteratively modifies the initial flow~$x$, and whenever the flow on an arc is reduced to zero during this process, the arc is removed from the network.

In a preliminary phase, starting from the given flow~$y\coloneqq x$, the DGG~Algorithm iteratively moves sinks backward toward the source~$s$ along flow-carrying arcs whose flow meets or exceeds the sink’s demand, decreasing the flow~$y$ on the traversed arcs accordingly. The preliminary phase terminates once every sink has either reached the source~$s$ or is located at a vertex with at least two incoming arcs and is regular: A sink is called \emph{regular} if the flow on each arc entering its current vertex is strictly smaller than the demand of the sink; otherwise, the sink is \emph{irregular}.

Once the preliminary phase ends, the algorithm iteratively augments the current flow~$y$ along a carefully chosen alternating cycle and, in each iteration, tries to move the sinks toward the source~$s$ according to certain rules. 

A key notion for constructing alternating cycles is that of \emph{singular arcs}. At the beginning of an iteration, an arc~$(u,v)$ is labeled \emph{singular} if~$v$ and all vertices reachable from~$v$ have out-degree at most one. An alternating cycle is constructed as follows. Starting from an arbitrary vertex, follow the outgoing arcs until a \emph{junction vertex}~$w$, that is, a vertex  with no outgoing arcs, is reached (the \emph{forward path}). Then, beginning with an incoming arc at~$w$ that is distinct from the arc used to reach~$w$, construct a \emph{backward path} traversing the incoming singular arcs as far as possible. Upon reaching a vertex with at least two outgoing arcs, resume the construction of a forward path. This alternation of forward and backward paths is repeated until a vertex is revisited, thereby forming an \emph{alternating cycle}~$C$ consisting of alternating forward and backward paths.

The flow~$y$ is then augmented along the alternating cycle~$C$ as follows. The flow on each forward arc of~$C$ is decreased, and the flow on each backward arc of~$C$ is increased, by the same positive amount which is the minimum of two values: The first value is the minimum flow along a forward arc of~$C$. The second is the minimum of~$d_t-y_a$ taken over all backward arcs~$a=(v,w)$ of~$C$ and over all sinks~$t$ at~$w$ with~$d_t>y_a$. If the minimum is achieved by the flow~$y_a$ on a forward arc~$a$, then the augmentation reduces the flow on this arc to zero and the arc is thus deleted. Finally, if possible, sinks are moved backward towards source~$s$, preferably along singular arcs carrying exactly their demand, and otherwise along non-singular arcs carrying at least their demand.

The algorithm terminates once all sinks have reached the source~$s$, and the resulting unsplittable flow is defined by the collection of paths traced by the sinks during their backward movement.

\subsection{Analysis of the DGG Algorithm}

The correctness and analysis of the DGG Algorithm relies on the following invariants, maintained throughout its execution:
        
\begin{enumerate}[(i)]\itemsep0ex
\item\label{inv:flow_satisfies_demands} 
Flow~$y$ satisfies all current demands of sinks not yet moved to~$s$.

\vspace{-2ex}
\begin{proof}
In the augmentation step, flow is increased along backward paths and decreased along forward paths by the same amount, preserving the excess (inflow minus outflow) at each vertex, thus maintaining a flow that meets all current demands.    
\end{proof}
\vspace{-2ex}
\item\label{inv:vertex_w_sinks_2_inc}
At the beginning of each iteration, every vertex contains at most one irregular sink. If a vertex does contain an irregular sink, then its out-degree is zero and it also contains a regular sink. In particular, every vertex that contains sinks has at least two incoming arcs.

\vspace{-2ex}
\begin{proof}
We refer to the proof of Lemma 3.2 in~\cite{DGG}. 
\end{proof}

\vspace{-2ex}
Notice that the last part of this invariant is crucial for the construction of alternating cycles since it guarantees that, for each junction vertex, there exists an incoming arc different from the forward arc used to reach it such that a backward path can be constructed.
\item\label{inv:moving_removes_sing}
A singular arc is removed at the end of an iteration in which any sink moves along it.

\vspace{-2ex}
\begin{proof}
A sink moves along a singular arc in an iteration only when the flow on that arc exactly matches the demand of that sink before moving.
\end{proof}
\vspace{-2ex}
\end{enumerate}

Until an arc~$a$ becomes singular, its flow never increases and the total demand of sinks moved along it is bounded by its initial flow value~$x_a$. Once~$a$ is singular, at most one sink with demand at most~$d_{\max}$ can move along it before~$a$ disappears. Thus, in the final unsplittable flow, the flow on any arc is less than~$x_a + d_{\max}$. We refer the interested reader to~\cite{DGG} for further details.    
      
%----------
% Section 3
%----------

\section{Modified DGG Algorithm for unsplittable transshipments}
\label{sec:modified_DGG}

In this section, we modify the DGG Algorithm and its analysis in order to compute unsplittable transshipments with bounded congestion. We are given an acyclic digraph~$D = (V,A)$ with a set of sources~$S^+$ and a set of sinks~$S^-$, together with supplies and demands specified by a balance function $b \colon V \to \mathbb{R}$ and a fractional $b$-transshipment~$x$. We first transform this unsplittable transshipment instance into an SSUF instance by introducing a super-source~$s^*$ and dummy arcs $(s^*,s)$ to each source~$s \in S^+$, each carrying flow~$x_{(s^*,s)} \coloneqq b(s)$, and by setting~$d_t \coloneqq |b(t)|$ for all sinks~$t \in S^-$.

However, applying the DGG Algorithm to the resulting SSUF instance does not necessarily yield a valid $b$-transshipment. During the execution of the algorithm, the flow on dummy arcs~$(s^*,s)$ that lie on alternating cycles may be modified, which can result in an unsplittable flow that no longer satisfies the fixed supply constraints at the sources~$s \in S^+$.

On the other hand, the DGG Algorithm routes the entire demand of each sink along a single path and does not exploit the possibility of sending flow into a sink along multiple paths originating from different sources in~$S^+$. The central idea of our Modified DGG Algorithm is to leverage this flexibility in order to keep the flow on dummy arcs~$(s^*,s)$, and hence the supplies at the sources, fixed.

\subsection{Description of the Modified DGG Algorithm}

As in the original DGG Algorithm, the Modified DGG Algorithm begins with the flow~$y \coloneqq  x$ and, in a preliminary phase, iteratively moves sinks backward toward the super-source~$s^*$ along flow-carrying arcs incident to sinks whose flow meets or exceeds their demands. After the preliminary phase, the algorithm iteratively attempts to construct alternating cycles that do not contain the super-source~$s^*$. Augmenting flow along a cycle that includes~$s^*$ would modify the flow on arcs~$(s^*,s)$ and thus alter the fixed supply values of such sources~$s\in S^+$. We therefore proceed as follows.

\paragraph{Nice alternating cycles and singular digraphs.}

As in the original DGG Algorithm, we attempt to construct an alternating cycle by alternating between \emph{forward} (arbitrary) and \emph{backward} (singular) arcs. The last vertex reached on a forward path during an iteration is a junction vertex~$w$ with no outgoing arcs. All incoming arcs of~$w$ are singular, and, by flow conservation,~$w$ necessarily contains one or more sinks. Furthermore, a vertex is called a \emph{funnel vertex} if it has at most one outgoing arc; otherwise, it is called a \emph{non-funnel vertex}. Note that the head of any singular arc is a funnel vertex, and that backward paths in alternating cycles always terminate upon reaching a non-funnel vertex. We call an alternating cycle \emph{nice} if it does not contain the super-source~$s^*$. Nice alternating cycles allow us to augment the flow without changing the supplies at the sources in~$S^+$ and thus to proceed exactly as in the original DGG Algorithm.

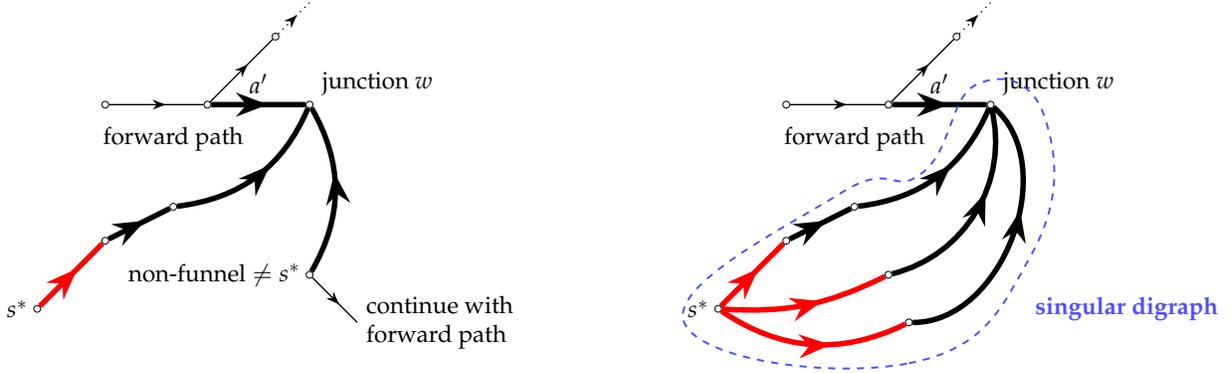
\begin{figure}[t]
    \centering
    \tikzset{
    midarrow/.style={postaction={decorate, decoration={markings, mark=at position 0.6 with {\arrow{Stealth}}}}}
}

\tikzset{
  >=Stealth,
  v/.style={circle,draw,inner sep=1pt},
  lab/.style={draw=none,font=\small},
  fwd/.style={line width=0.2mm, midarrow},
  bwd/.style={line width=0.8mm, midarrow},
  sing/.style={line width=0.8mm, midarrow},
  junction/.style={circle,draw,very thick,inner sep=1pt},
  nonfunnel/.style={circle,draw,thick,double},
}

\begin{tikzpicture}

% ================= LEFT DIAGRAM =================
\begin{scope}

% Nodes
\node[v] (v)  at (0,5) {};
\node[v] (f1) at (1.5,5) {};
\node[v] (f2) at (2.5,6) {};
\node[v] (w) at (3,5) {};
\node[v] (nf) at (1,3.5) {};
\node[v] (x)  at (0,3) {};
\node[v] (s) at (-1,2) {};
\node[v] (nf1) at (3,2.5) {};

% Labels
%\node[lab,left] at (v) {$v$};
\node[lab,above right] at (w.east) {junction $w$};
\node[lab,left] at (s) {$s^*$};
\node[lab] at (1,4.5) {forward path};
\node[lab,left,align=left] at (nf1) {non-funnel $\neq s^*$};
\node[lab,right,align=left] at (3.7,1.8) {continue with\\ forward path};

% Forward path
\draw[fwd] (v)--(f1);
\draw[fwd] (f1)--(f2);
\draw[fwd, dotted] (f2)--(3,6.5);
\draw[sing] (f1)--node[lab,above]{$a'$}(w);

% Backward path to s*
\draw[bwd] (nf) to[bend right=30] (w);
\draw[sing] (x)--(nf);
\draw[sing,red] (s)--(x);

% Alternative continuation
\draw[bwd] (nf1) to[bend right=30] (w);
\draw[fwd] (nf1)--(3.7,1.8);

\end{scope}

%\node[lab] at (7,4) {\textbf{OR}};

% ================= RIGHT DIAGRAM =================

\begin{scope}[shift={(10,0)}]

% Nodes
\node[v] (v)  at (0,5) {};
\node[v] (f1) at (1.5,5) {};
\node[v] (f2) at (2.5,6) {};
\node[v] (w) at (3,5) {};
\node[v] (nf) at (1,3.5) {};
\node[v] (x)  at (0,3) {};
\node[v] (s) at (-1,2) {};
\node[v] (a)  at (1.5,2.5) {};
\node[v] (b)  at (1.8,1.8) {};

% Labels
%\node[lab,left] at (v) {$v$};
\node[lab,above right] at (w.east) {junction $w$};
\node[lab,left] at (s) {$s^*$};
\node[lab] at (1,4.5) {forward path};

% Forward path
\draw[fwd] (v)--(f1);
\draw[fwd] (f1)--(f2);
\draw[fwd, dotted] (f2)--(3,6.5);
\draw[sing] (f1)--node[lab,above]{$a'$}(w);

% Multiple backward paths
\draw[bwd] (nf) to[bend right=30] (w);
\draw[sing] (x)--(nf);
\draw[sing,red] (s)--(x);

\draw[bwd] (a) to[bend right=45] (w);
\draw[bwd,red] (s) to[bend right=15] (a);

\draw[bwd] (b) to[bend right=70] (w);
\draw[bwd,red] (s) to[bend right=30] (b);

% Singular digraph boundary
\draw[thick,dashed,blue!70,smooth cycle]
  plot[tension=0.8] coordinates {
    (-1.5,1.8)
    (1.8,1.2)
    (3.6,2.8)
    (3.8,4.8)
    (2.8,5.3)
    (2,4)
    (0.8,3.7)
  };

\node[lab,align=center, blue!70] at (5,2) {\textbf{singular digraph}};

\end{scope}

\end{tikzpicture}
    \vspace{-.5cm}
    \caption{Backward path discovery in the Modified DGG Algorithm starting at junction vertex~$w$ with singular arcs depicted in thick: \emph{Left}: a non-funnel vertex~$\neq s^*$ is found along some backward path, allowing to continue the construction of a nice alternating cycle; \emph{Right}: the only non-funnel vertex found on backward paths is~$s^*$, yielding a singular digraph rooted at junction vertex~$w$.}
    \label{fig:try_nice_alt}      
\end{figure}  

There are, however, situations in which no nice alternating cycle can be found. This occurs precisely when, starting from a junction vertex~$w$, all possible backward paths lead to the super-source~$s^*$. Equivalently,~$s^*$ is the only non-funnel vertex reachable from~$w$ via backward paths. This can be verified by a simple backward depth-first search (DFS) starting at~$w$ while ignoring the arc~$a'$ used to reach~$w$. If backward DFS finds a non-funnel vertex~$v\neq s^*$, we can continue our search for a nice alternating cycle along a forward path from~$v$. Otherwise, if all backward paths lead to~$s^*$,  we refer to the sub-digraph explored by this backward DFS as the \emph{singular digraph rooted at~$w$}.  

If a singular digraph rooted at the junction vertex~$w$ is found, we select a (sub-)sink~$t$ located at~$w$ with~$d_t > y_{a'}$. The existence of such a (sub-)sink is guaranteed by Invariant~\ref{inv:vertex_w_sinks_2_inc}, whose validity is established below. This (sub-)sink~$t$ is uniquely chosen whenever a singular digraph is encountered and will later be split into two sub-sinks~$t_1$ and~$t_2$ with demands~$d_{t_1} \coloneqq y_{a'}$ and $d_{t_2} \coloneqq d_t - y_{a'}$, respectively. After all other (sub-)sinks in the singular digraph have been moved to the super-source as described below, the sub-sink~$t_2$ is created and routed to the super-source as well. Subsequently, the sub-sink~$t_1$ is moved backward along the arc~$a'$, reducing the flow on~$a'$ to zero and thereby removing~$a'$ from the network.

\begin{lemma}
\label{lem:singular-digraph}
With the exception of arc~$a'$, all incoming arcs of vertices in the singular digraph rooted at~$w$ belong to the singular digraph. Moreover, every vertex~$v\neq s^*$ in the singular digraph has at most one outgoing arc, and this outgoing arc belongs to the singular digraph.
\end{lemma}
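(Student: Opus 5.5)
The plan is to argue directly from the way the backward DFS builds the singular digraph, using the defining property of singular arcs and the definition of funnel vertices. The singular digraph rooted at~$w$ is the set of vertices and arcs explored by a backward DFS from~$w$. The DFS traverses every incoming arc of~$w$ other than~$a'$, and it continues backward through the tail of a traversed arc as long as that tail is a funnel vertex different from~$s^*$. Since we are in the case where no non-funnel vertex other than~$s^*$ is encountered, every explored vertex $v\neq s^*$ is a funnel vertex, that is, it has at most one outgoing arc. So every explored vertex is either~$w$, or~$s^*$, or a funnel vertex from which the search was continued backward.

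\emph{First claim.} Consider any vertex~$v$ of the singular digraph and an incoming arc~$(u,v)\neq a'$. If $v=w$, the arc is traversed by construction. If $v\neq w,s^*$, then~$v$ was reached as the tail of an explored arc and is a funnel vertex other than~$s^*$, so the DFS continues backward from~$v$ along all of its incoming arcs; hence $(u,v)$ is explored. For $v=s^*$ there are no incoming arcs, since~$s^*$ only has the dummy arcs leaving it. One subtlety needs to be checked: whether~$a'$ itself could be an incoming arc of some vertex other than~$w$. This is impossible because~$a'$ enters~$w$. The DFS only excludes~$a'$ at~$w$, so the exception in the statement is exactly~$a'$.

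\emph{Second claim.} Let $v\neq s^*$ be in the singular digraph. If $v=w$, then~$w$ is a junction vertex, has no outgoing arcs, and the claim holds vacuously. Otherwise,~$v$ was discovered as the tail of some explored arc $(v,z)$. Since $v\neq s^*$ and no non-funnel vertex other than~$s^*$ is found, $v$ is a funnel vertex. Thus $(v,z)$ is its unique outgoing arc, and it lies in the singular digraph.

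The main obstacle is making precise that the DFS indeed explores \emph{all} incoming arcs at every funnel vertex it reaches, rather than only singular ones. For this I would invoke the fact that the head of any arc entering a funnel vertex~$v$ is~$v$ itself. Every vertex reachable forward from~$v$ follows~$v$'s unique out-arc, and that path ends in the singular structure at~$w$, whose vertices all have out-degree at most one. Hence every incoming arc of such~$v$ is singular, so backward paths may use it. Acyclicity of~$D$ ensures that the forward walk from~$v$ is finite and leads to~$w$, which closes this argument.
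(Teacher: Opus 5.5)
Your proposal is correct and follows the same route as the paper's proof. The first claim comes straight from the exhaustive backward DFS, and the second from the fact that every explored vertex other than~$s^*$ is a funnel vertex whose unique out-arc is the explored arc leading toward~$w$; your additional check that every incoming arc of an explored funnel vertex is singular (so the search over singular arcs sees all incoming arcs) fills in a detail the paper leaves implicit.
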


\begin{proof}
The first statement follows from the fact that the singular digraph is constructed via a backward depth-first search. For the second statement, observe that every vertex~$v\neq s^*$ in the singular digraph is a funnel vertex and hence has at most one outgoing arc. Moreover, this outgoing arc lies on the unique path from~$v$ to the junction vertex~$w$, which is entirely contained in the singular digraph.  
\end{proof}

In other words, Lemma~\ref{lem:singular-digraph} states that the sub-digraph of the current network induced by all vertices of the singular digraph except~$s^*$ is an in-tree, i.e., a directed tree rooted at~$w$ with all arcs oriented toward it.
As a consequence of Lemma~\ref{lem:singular-digraph}, the flow~$y$ on the singular digraph exactly satisfies the demands of all (sub-)sinks located at its vertices, with the sole exception of the uniquely chosen (sub-)sink~$t$, for which only the amount~$d_{t_2} \coloneqq d_t - y_{a'}$ is satisfied within the singular digraph. Since the singular digraph contains at most one path from each source in~$S^+$ to each of its (sub-)sinks, any path decomposition of its flow yields an unsplittable transshipment that satisfies all these demands.

Equivalently, the algorithm decomposes the flow~$y$ on the singular digraph into paths, splits each (sub-)sink into as many sub-sinks as there are paths serving it, and then moves each such sub-sink backward along its corresponding path to the super-source~$s^*$. Finally, all arcs of the singular digraph are removed.

At the end of an iteration, further sinks may move backward towards super-source~$s^*$ exactly as in the original DGG Algorithm. Algorithm~\ref{alg:modifiedDGG} gives a summary of the Modified DGG Algorithm. 

\begin{algorithm}[t]
\caption{Modified DGG Algorithm\label{alg:modifiedDGG}}
\begin{minipage}{\linewidth}
\hspace*{\algorithmicindent}\textbf{Input:} Unsplittable transshipment instance~$(D, S^+, S^-, b, x)$ \\
\hspace*{\algorithmicindent}\textbf{Output:} Unsplittable transshipment defined by path sets~$\{\mathcal{P}_t\}_{t\in S^-}$
\begin{algorithmic}[1]
\State Construct corresponding SSUF instance~$(D^*, s^*, S^-, d, x)$; set~$y\coloneqq x$
\While{there exists~$a=(v,t)\in A$ with~$y_a\geq d_t$} \Comment{\textbf{Preliminary phase}}
    \State decrease~$y_a$ by~$d_t$; move~$t$ to~$v$; delete~$a$ if~$y_a=0$
\EndWhile
%\State Remove sinks that reached~$s^*$
\While{some (sub-)sink has not reached~$s^*$}\Comment{\textbf{Iterations}}
    \State Label singular arcs and funnel vertices
    \If{a \textit{nice alternating cycle} is found}
        \State Augment flow and move (sub-)sinks exactly as in the original DGG Algorithm
    \Else
        \State Move the (sub-)sinks in the singular digraph \hyperlink{alg:singular-footnote}{\textsuperscript{$1$}} backward to the super-source
        \label{step:singular}
        \State Move further (sub-)sinks according to DGG moving rules
    \EndIf
\EndWhile
%\State \Return~$\{\mathcal{P}_i\}_{i\in S^-}$
\end{algorithmic}

% \footnotetext{`\emph{(sub-)sinks in the singular digraph}' refers to all (sub-)sinks other than~$t$ located at vertices of the singular digraph, and the sub-sink~${t_2}$ of~$t$.}
\medskip
\footnotesize
\hypertarget{alg:singular-footnote}{\textsuperscript{$1$}}
\emph{(sub-)sinks in the singular digraph} refers to all (sub-)sinks other than~$t$
located at vertices of the singular digraph, and the sub-sink~$t_2$ of~$t$.
\end{minipage}

\end{algorithm}

\subsection{Analysis of the Modified DGG Algorithm}

The correctness of the Modified DGG Algorithm is based on the preservation of the same Invariants~\ref{inv:flow_satisfies_demands},~\ref{inv:vertex_w_sinks_2_inc}, and~\ref{inv:moving_removes_sing} as the original DGG Algorithm.

\begin{proof}[Proof of Invariant~\ref{inv:flow_satisfies_demands}]
If a nice alternating cycle is found in an iteration, the flow is augmented using the same procedure as in the original DGG Algorithm. This augmentation preserves the property that the flow satisfies the demands of all current (sub-)sinks that have not yet reached~$s^*$. In the remaining iterations, the flow on the singular digraph exactly satisfies the demands of the \mbox{(sub-)sinks} located at its vertices. Consequently, removing the arcs of the singular digraph together with its (sub-)sinks does not violate Invariant~\ref{inv:flow_satisfies_demands}.
\end{proof}

\begin{proof}[Proof of Invariant~\ref{inv:vertex_w_sinks_2_inc}]
The proof proceeds by induction on the number of iterations. All sinks are regular at the end of the preliminary phase such that the invariant holds at the beginning of the first iteration. If a nice alternating cycle is found in an iteration, Invariant~\ref{inv:vertex_w_sinks_2_inc} is preserved exactly as shown in the proof of~\cite[Lemma~3.2]{DGG}. Otherwise, the algorithm identifies a singular digraph rooted at the junction vertex~$w$ and splits a sink~$t$ located at~$w$ into two sub-sinks~$t_1$ and~$t_2$ such that the demand~$d_{t_1}$ of~$t_1$ equals the flow on the arc~$a'$ along which the forward path reaches~$w$. Except for~$t_1$, all (sub-)sinks at the vertices of the singular digraph are deleted together with its arcs, while~$t_1$ is moved to the tail of~$a'$ (and possibly further), and the arc~$a'$ is removed.

Consequently,~$t_1$ is the only sink that can potentially violate Invariant~\ref{inv:vertex_w_sinks_2_inc}. Suppose that~$t_1$ is eventually moved to a vertex~$v$. If~$t_1$ is regular at~$v$, the invariant is clearly preserved. Otherwise,~$t_1$ is irregular at~$v$, which implies the existence of an incoming arc~$a$ with flow value~$y_a \ge d_{t_1}$. Since~$t_1$ does not move further backward along~$a$, it must hold that~$y_a > d_{t_1}$ and that the arc~$a$ was labeled singular at the beginning of the iteration.

In particular, the outgoing arc of~$v$ along which~$t_1$ was moved backward to~$v$ must have been singular and, moreover, the only outgoing arc of~$v$. This arc is therefore deleted after the move, implying that the new out-degree of~$v$ is zero. Since there was no irregular sink at~$v$ at the beginning of the iteration (as its out-degree was one),~$t_1$ is the only irregular sink at~$v$. Finally, as~$y_a > d_{t_1}$, there must exist at least one additional sink at~$v$, which completes the proof.
\end{proof}

By Invariant~\ref{inv:vertex_w_sinks_2_inc}, whenever a forward path reaches a junction vertex~$w$ along an arc~$a'$, there exists at least one additional incoming (singular) arc at~$w$ along which the backward depth-first search can continue and either reach a non-funnel vertex~$v \neq s^*$ or identify a singular digraph. Consequently, in every iteration, the algorithm finds either a nice alternating cycle or a singular digraph, as described in the algorithm.

\begin{proof}[Proof of Invariant~\ref{inv:moving_removes_sing}]
A (sub-)sink can be moved along a singular arc in two distinct ways. If it is moved according to the first DGG moving rule, the flow on the arc is reduced to zero and the arc is removed immediately. Otherwise, the arc belongs to a singular digraph, along which (sub-)sinks are routed without augmenting the flow; all arcs in this singular digraph are removed at the end of the iteration. In both cases, any singular arc used to move a (sub-)sink toward the source is removed from the digraph within the same iteration, thereby preserving the invariant.
\end{proof}

Next we prove that the~$b$-transshipment computed by the Modified DGG Algorithm is unsplittable. We actually prove the even stronger property that the demand of every sink is routed \emph{confluently}; that is, the arcs used to route the demand of any sink form an in-tree, i.e., a directed tree rooted at the sink with all arcs oriented toward it.

\begin{lemma}
\label{lem:confluent}
The Modified DGG Algorithm computes a~$b$-transshipment in which the demand of every sink is routed confluently.
\end{lemma}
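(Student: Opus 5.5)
The proof splits into two claims: the computed flow is a $b$-transshipment, and for each original sink $t\in S^-$ the union of the paths traced by its sub-sinks is an in-tree rooted at $t$. Since in an in-tree every source--sink pair is joined by at most one path, the second claim also gives unsplittability.

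\textbf{Feasibility.} I would first observe that every (sub-)sink eventually reaches $s^*$. Each iteration either deletes an arc of a nice alternating cycle or a singular arc, or it deletes the whole singular digraph together with $a'$. So the number of arcs strictly decreases, and by Invariant~\ref{inv:vertex_w_sinks_2_inc} every iteration can be carried out.

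Next, the demands $d_{t'}$ of the sub-sinks $t'$ created from $t$ always sum to $d_t$. Splittings only replace a sub-sink by sub-sinks whose demands add up to the old demand: $t_1,t_2$ with $d_{t_1}+d_{t_2}=d_t$, and the path decomposition inside the singular digraph. Hence the routed flow delivers exactly $d_t$ to each sink.

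For the supplies, the key point is that the flow on a dummy arc $(s^*,s)$ is never changed by an augmentation, because only nice cycles are augmented. It is decreased only when sub-sinks move along it, both in DGG moves and in singular-digraph routing. At termination no sub-sinks remain, so by Invariant~\ref{inv:flow_satisfies_demands} every dummy arc carries zero residual flow. Therefore the total demand routed through $(s^*,s)$ equals $x_{(s^*,s)}=b(s)$, which is exactly the required supply at $s$.

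\textbf{Confluence.} I would prove by induction over the algorithm that the following invariant holds. For every original sink $t$, the traced arcs of its sub-sinks form an in-forest. More precisely, each vertex other than $t$ has at most one traced out-arc for $t$, and the current positions of the sub-sinks of $t$ are exactly the roots (leaves towards $s^*$) from which tracing can continue.

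A sub-sink is born only at its parent's current position, so its path extends the parent's path. Two sub-sinks of the same $t$ could break confluence only by leaving some vertex $v$ along two different arcs. I then check where splitting happens.
\begin{itemize}
\item At $w$, $t_2$ is routed inside the singular digraph and $t_1$ along $a'$. By Lemma~\ref{lem:singular-digraph}, $a'$ is not in the singular digraph, so the two sub-sinks use distinct first arcs. Afterwards, $t_2$'s descendants move only within the singular digraph, whose vertices other than $s^*$ are never visited again because its arcs are deleted.
\item Inside the singular digraph, the paths form an in-tree by Lemma~\ref{lem:singular-digraph}, so they are confluent towards $w$.
\end{itemize}

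\textbf{Main obstacle.} The hard case is excluding the following situation: a sub-sink of $t$ later reaches a vertex $v$ already visited by another sub-sink of $t$, and then leaves $v$ along a different arc. This would create two distinct out-arcs at $v$ for $t$, i.e.\ a cycle in the underlying undirected structure, and hence non-confluence.

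My approach is to use the fact that every vertex visited in a singular-digraph routing (other than $s^*$) loses all its incoming arcs, hence becomes unreachable backward. So $t_1$ and its descendants, which move backward from the tail of $a'$, live in a part of the network disjoint from the part already used by $t_2$'s paths. Here I use that $D$ is acyclic and that arcs are only ever deleted.

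It remains to handle the DGG moves, where a single sub-sink follows one path. There I would invoke the single-path behaviour of the original DGG analysis: a sub-sink never revisits a vertex, by acyclicity.

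I expect this disjointness argument to be the delicate step. In particular I need to check that the vertices of the singular digraph are truly cut off: every incoming arc except $a'$ is removed (Lemma~\ref{lem:singular-digraph}), and $a'$ itself is deleted.
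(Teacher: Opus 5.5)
Your overall plan matches the paper's: single paths outside singular digraphs, an in-tree inside each singular digraph by Lemma~\ref{lem:singular-digraph}, and disjointness of the later route of $t_1$ from the singular digraph. Your feasibility part is fine. The gap is in the step you yourself flag as the crux.

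You argue that each vertex $z\neq s^*$ of the singular digraph loses all its \emph{incoming} arcs and is therefore ``unreachable backward,'' invoking acyclicity and monotone arc deletion. That inference is not valid as stated. A backward move from $y$ to $z$ traverses the arc $(z,y)$, i.e., an \emph{outgoing} arc of $z$, so incoming arcs of $z$ are irrelevant to whether a sub-sink can enter it. Indeed, $s^*$ has no incoming arcs and is entered by every sub-sink.

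Acyclicity and deletion do not rule out the bad scenario either. If some $z\neq s^*$ of the singular digraph kept an outgoing arc $(z,y)$ with $y$ outside, then $t_1$ could later reach $y$ and step to $z$. Then $z$ would carry two out-arcs in the routing of $t$: its arc inside the singular digraph, used by $t_2$, and $(z,y)$. This destroys confluence.

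The same direction confusion appears in your description of the obstacle. A later sub-sink \emph{leaving} an already visited vertex along a different arc is harmless branching. The dangerous event is \emph{arriving} at it along a different arc.

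The missing ingredient is the second statement of Lemma~\ref{lem:singular-digraph}. Every vertex $v\neq s^*$ of the singular digraph is a funnel vertex whose unique outgoing arc lies in the singular digraph. All these arcs are deleted at the end of the iteration and arcs are never added, so these vertices have out-degree zero from then on. Hence no later backward move of $t_1$ or any of its descendants, including moves inside later singular digraphs, can enter them. This is exactly the paper's argument.

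If you want to keep your incoming-arc formulation, you need an additional argument. One option is Invariant~\ref{inv:flow_satisfies_demands} together with the deletion of zero-flow arcs: this forces a vertex $\neq s^*$ without incoming arcs to have zero outflow, hence no outgoing arcs. Acyclicity alone does not give this.
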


\begin{proof}
We argue that, after deleting the super-source~$s^*$, the paths along which a sink~$t$ (or its sub-sinks) are moved backward toward~$s^*$ form an arborescence (out-tree) rooted at~$t$, whose leaves are sources in~$S^+$. Equivalently, the arcs used to route the demand of~$t$ form an in-tree.

During the preliminary phase, as well as in later iterations in which a (sub-)sink~$t$ is not located at a vertex of a singular digraph, it is moved backward along a single path toward the super-source~$s^*$. Therefore, it remains to consider iterations in which~$t$ is located at a vertex of a singular digraph rooted at a junction vertex~$w$. We distinguish two cases.

\emph{First case.} 
Assume that~$t$ is the uniquely chosen (sub-)sink located at~$w$ that is split by the algorithm into two sub-sinks~$t_1$ and~$t_2$. In this case,~$t_2$ may be further split into sub-sinks that are routed backward toward~$s^*$ within the singular digraph. By Lemma~\ref{lem:singular-digraph}, the corresponding backward paths, after deleting~$s^*$, form a tree. Since all arcs of the singular digraph are removed at the end of the iteration, all its vertices except for~$s^*$ have no outgoing arcs afterwards. Consequently, the paths along which the sub-sink~$t_1$ (or its sub-sinks) is later moved backward toward~$s^*$ do not intersect any of these vertices.

\emph{Second case.}  
Otherwise, by the same argument as for~$t_2$ in the first case, the backward paths taken by~$t$ (or its sub-sinks), after deleting~$s^*$, form a tree.

In summary, the backward paths along which the demand of a sink~$t$ (or its sub-sinks) is moved form, after deleting the super-source~$s^*$, an arborescence rooted at~$t$.
\end{proof}

We are now ready to prove the main result of this section, generalizing the classical result of Dinitz, Garg, and Goemans~\cite{DGG} to the Unsplittable Transshipment Problem.

\begin{theorem}\label{thm:main_result_UT}
Given a~$b$-transshipment~$x$, the Modified DGG Algorithm efficiently computes an unsplittable transshipment such that the flow on each arc~$a$ is strictly less than~$x_a + d_{\max}$.
\end{theorem}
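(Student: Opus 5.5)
We establish three things in turn: the output is a valid unsplittable $b$-transshipment, the congestion bound $y_a<x_a+d_{\max}$ holds on every arc, and the algorithm runs in polynomial time.

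\emph{Validity.} By Invariant~\ref{inv:flow_satisfies_demands}, every (sub-)sink that has not reached $s^*$ is served by the current flow. The dummy arcs $(s^*,s)$ never lie on a nice alternating cycle, so their flow is never augmented. It only decreases when (sub-)sinks move along them, and each decrease equals the demand of the moved (sub-)sink. When all (sub-)sinks have reached $s^*$, the dummy arcs carry zero flow. Hence exactly $b(s)$ units of demand have been routed out of each source $s$, and the supplies are met. Lemma~\ref{lem:confluent} shows that the paths serving each sink $t$ form an in-tree rooted at $t$. An in-tree contains at most one path from any given source to $t$, so each source--sink pair is served by at most one path, which is unsplittability. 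Because the digraph is acyclic, routing on $D^*$ is routing on $D$ once the dummy arcs are discarded.

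\emph{Congestion.} We follow the DGG accounting. Fix an arc $a\in A$.
\begin{itemize}
\item \emph{Before $a$ becomes singular.} Its flow is never increased: backward arcs of nice alternating cycles are singular, and singular-digraph iterations do not augment at all. Every move of a (sub-)sink along $a$ decreases $y_a$ by that (sub-)sink's demand, and $y_a\ge 0$ throughout. So the total demand moved across $a$ while it is non-singular is at most $x_a$ minus the flow $y_a$ at the moment $a$ becomes singular.
\item \emph{After $a$ becomes singular.} By Invariant~\ref{inv:moving_removes_sing}, $a$ disappears at the end of the first iteration in which anything moves along it. Two cases arise.
  \begin{itemize}
  \item If the move follows a DGG rule, a single (sub-)sink $t$ moves along $a$ with $d_t=y_a$. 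The DGG augmentation rule caps increases so that $y_a$ never exceeds $d_t\le d_{\max}$, and $y_a$ is strictly below $d_{\max}$ plus the flow when $a$ became singular; the contribution is exactly as in DGG.
  \item If $a$ lies in a singular digraph, the (sub-)sinks routed through $a$ use exactly the current flow $y_a$, since the path decomposition routes $y$ itself. In the case $a=a'$, the sub-sink $t_1$ uses exactly $y_{a'}$.
  \end{itemize}
  In both cases, the total demand crossing $a$ after it becomes singular equals the flow $y_a$ at that moment.
\end{itemize}

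The remaining task is to show that this final $y_a$ is less than $(\text{flow when $a$ became singular}) + d_{\max}$. The difficulty is that a singular arc may be augmented several times, unlike the single-sink argument where one sink takes it. The DGG argument shows that each augmentation keeps $y_a<d_t$ for some (sub-)sink $t$ at the head of $a$ with $d_t>y_a$. Sub-sinks only ever have demand at most their parent's demand, so this gives $y_a<d_{\max}$ plus the earlier flow. Strictness must also hold in the singular-digraph case, which requires checking that a singular arc inside a singular digraph, at the time of routing, carries less than $d_{\max}$ more than before. This follows from the same bound because no augmentation happens in that iteration.

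Summing the two phases gives $y_a<x_a+d_{\max}$. This step is the main obstacle, because sub-sink splitting must not break the DGG augmentation bound; I would verify it by checking that the quantity $d_t-y_a$ used in the augmentation is taken over current (sub-)sinks, whose demands are all at most $d_{\max}$.

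\emph{Efficiency.} We bound the number of iterations polynomially.
\begin{itemize}
\item Each nice-cycle iteration either deletes an arc or makes some sink irregular and then moves it, exactly as in DGG.
\item Each singular-digraph iteration deletes $a'$ and all arcs of the singular digraph, so there are at most $|A|$ such iterations.
\end{itemize}
Splitting creates at most one new sub-sink per path of each decomposition. The path decomposition of an in-tree uses at most as many paths as it has arcs, so the number of sub-sinks stays polynomial. Every other step is a DFS or an augmentation, so the whole algorithm runs in polynomial time.
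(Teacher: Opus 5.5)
Your accounting matches the paper's. While $a$ is non-singular its flow never increases, so the demand routed across it is at most $x_a$ minus its (positive) flow when it becomes singular. Afterwards, everything crossing $a$ in the one iteration that uses it totals the current value $y_a$. Your explicit check that dummy arcs are never augmented and end with zero flow, so that each source ships exactly $b(s)$, is a useful point the paper leaves implicit.

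The gap is in the step you flag as the main obstacle: bounding $y_a$ when a singular arc is finally used. You justify it by saying ``the DGG argument'' keeps $y_a<d_t$ for some (sub-)sink $t$ \emph{at the head of~$a$}. The augmentation cap $d_t-y_a$, however, is imposed only on backward arcs whose head carries sinks. A singular arc further back on a backward path, or inside the in-tree of a singular digraph, usually has a head with no sinks at all, so the cap says nothing about it. These are exactly the arcs across which several sub-sinks can be routed at once in a singular-digraph iteration. Even when the head does hold sinks, you never establish that one of them has $d_t>y_a$. Nor can this be imported from the original analysis: there at most one sink ever crosses a singular arc, so $d_t\le d_{\max}$ suffices and no bound on $y_a$ is ever proved.

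The missing ingredient, which is essentially the paper's whole proof, is a flow-conservation argument along the unique directed path leaving the head of $a$. This path is unique because $a$ is singular. Let $v$ be the first vertex on it containing a (sub-)sink, and let $a_v$ be the path arc entering $v$ (with $a_v=a$ if $v$ is the head of $a$). Intermediate vertices hold no sinks and have out-degree one, so $y_a\le y_{a_v}$. Invariant~\ref{inv:vertex_w_sinks_2_inc} guarantees a regular (sub-)sink $t$ at $v$, so $y_{a_v}<d_t\le d_{\max}$ at the start of every iteration. Applied at the start of a singular-digraph iteration, this shows every arc of the singular digraph carries less than $d_{\max}$, and so does $a'$, by the choice of $t$ with $d_t>y_{a'}$. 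Combined with the positive flow at the moment $a$ became singular, this gives $y_a<x_a+d_{\max}$. Without this step your congestion bound is not established.
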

         
\begin{proof}
Lemma~\ref{lem:confluent} implies that the algorithm computes an unsplittable transshipment. Thus, it remains to prove the bound on the flow values.
As in the original DGG Algorithm, the flow on a non-singular arc is never increased. Hence, the total demand routed along an arc up to the point at which it becomes singular is less than its initial flow. Subsequently, the arc~$a$ may appear repeatedly as a backward arc in nice alternating cycles, during which its flow may be increased multiple times. Observe that~$a$ remains singular until it is removed from the network, since no arcs are added during the algorithm and vertex out-degrees never increase. In the following, we argue that the total increase of flow on~$a$ is bounded by~$d_{\max}$.

After the last increase of flow on~$a$, let~$y$ denote the resulting flow and let~$t$ be the first regular (sub-)sink encountered at a vertex~$v$ along the unique directed path starting at the head of~$a$. If~$v$ is the head of~$a$, set~$a_v \coloneqq  a$; otherwise, let~$a_v$ be the incoming arc of~$v$ on this path. By Invariant~\ref{inv:vertex_w_sinks_2_inc},~$v$ is the first vertex on this path that contains a sink. Hence, by flow conservation,~$y_a \leq y_{a_v} \leq d_t$. In particular, the total increase of flow on arc~$a$ is strictly less than~\mbox{$d_t\leq d_{\max}$}.
\end{proof}

The \emph{congestion} of a transshipment is defined as the maximum, over all arcs, of the ratio between the flow routed on the arc and its capacity. We assume the standard \emph{balance condition}, that is, the capacity of every arc is at least the maximum demand~$d_{\max}$. Under this assumption, Theorem~\ref{thm:main_result_UT} immediately implies the following corollary.

\begin{corollary}
Under the balance condition, the Modified DGG Algorithm yields a~$2$-approximation algorithm for minimizing congestion in unsplittable transshipment instances.
\end{corollary}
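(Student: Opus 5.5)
The plan is to combine the optimal \emph{fractional} transshipment with Theorem~\ref{thm:main_result_UT}. The main obstacle is the step that compares the additive error~$d_{\max}$ with~$\mathrm{OPT}\cdot c_a$; a clean argument for that step is not yet in hand, so the proposal has an open gap there.

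\textbf{Step 1 (fractional lower bound).} Let~$\mathrm{OPT}$ be the minimum congestion of an unsplittable $b$-transshipment. Solve the linear program that minimizes~$\lambda$ subject to~$x$ being a $b$-transshipment and~$x_a\le\lambda c_a$ for all~$a\in A$. This takes polynomial time, for example by a parametric maximum flow computation. Every unsplittable $b$-transshipment is in particular a fractional one, so the optimal value satisfies~$\lambda^*\le\mathrm{OPT}$, and we obtain~$x^*$ with~$x^*_a\le\mathrm{OPT}\,c_a$. As in Section~\ref{sec:modified_DGG}, first cancel flow cycles in~$x^*$ and delete arcs with~$x^*_a=0$. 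This makes the support acyclic and does not increase any arc flow.

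\textbf{Step 2 (rounding).} Run the Modified DGG Algorithm on~$x^*$. By Theorem~\ref{thm:main_result_UT} and Lemma~\ref{lem:confluent}, we obtain an unsplittable transshipment~$y$ with
\[
y_a< x^*_a+d_{\max}\le \mathrm{OPT}\,c_a+d_{\max}\qquad\text{for all }a\in A .
\]
The balance condition~$d_{\max}\le c_a$ then gives congestion strictly below~$\mathrm{OPT}+1$. This is at most~$2\,\mathrm{OPT}$ exactly when~$\mathrm{OPT}\ge 1$.

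\textbf{Step 3 (closing the gap, the hard part).} For~$\mathrm{OPT}<1$ the bound from Step~2 is not enough. I would try to show that the additive term on every arc is at most~$\mathrm{OPT}\,c_a$, and there are two natural routes.

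\emph{Sharpen the congestion bound.} In the proof of Theorem~\ref{thm:main_result_UT}, the total increase on a singular arc~$a$ is at most~$y_{a_v}\le d_t$. Here~$d_t$ is the current demand of the first regular (sub-)sink downstream of~$a$, and~$y_{a_v}$ is itself a flow value on an arc whose flow never exceeded~$x^*_{a_v}$ before it became singular. I would try to strengthen this to an increase of at most~$\max_{a''}x^*_{a''}$ over arcs downstream of~$a$ in the final in-tree, and then relate those values to~$\mathrm{OPT}\,c_a$ via capacities.

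\emph{Rescale the capacities.} Alternatively, set~$c'_a\coloneqq\mathrm{OPT}\,c_a$. The instance then has fractional congestion~$1$, and the corollary follows from Step~2 for~$c'$ provided the balance condition holds for~$c'$. That requires~$d_{\max}\le\mathrm{OPT}\,c_a$ on every arc that~$y$ uses. I would try to guarantee this by first restricting the support to arcs on which some optimal solution can carry a sink's demand; this is the point where the per-pair splitting freedom of transshipments makes the single-source argument fail.

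I expect Step~3 to be the genuine difficulty. If neither route works, I would at least verify the bound directly on the small example of a single arc with~$c_a=100$ and~$d_{\max}=1$. There~$y=x^*$, so the true error is far smaller than~$d_{\max}$, which suggests that a sharpened form of Theorem~\ref{thm:main_result_UT} is the right tool.
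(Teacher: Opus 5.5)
Your Steps 1 and 2 are the whole of the paper's argument: the paper says the corollary follows ``immediately'' from Theorem~\ref{thm:main_result_UT}. You round a fractional transshipment of congestion $\lambda^*\le\mathrm{OPT}$ and use $d_{\max}\le c_a$ to get $y_a<(\lambda^*+1)\,c_a$.

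What finishes the proof is not a sharper rounding lemma but a convention. The unsplittable-flow literature (Kleinberg; Dinitz, Garg, and Goemans) defines the congestion objective as the smallest factor $\alpha\ge 1$ by which capacities must be scaled, so $\mathrm{OPT}\ge 1$ by definition. The paper's one-line derivation tacitly relies on this. Set $\lambda'\coloneqq\max\{1,\lambda^*\}\le\mathrm{OPT}$. Then $y_a<\lambda' c_a+d_{\max}\le(\lambda'+1)\,c_a\le 2\lambda' c_a\le 2\,\mathrm{OPT}\,c_a$, and you are done. Equivalently, whenever a feasible fractional transshipment exists, the output has congestion less than~$2$.

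Your Step~3 cannot be completed, because for $\mathrm{OPT}<1$ the claim is false under the literal max-ratio definition. Here is a counterexample.
\begin{itemize}
\item Take a single source $s$ with $b(s)=1$, a single sink $t$ with $b(t)=-1$, and two parallel arcs $a_1,a_2$ from $s$ to $t$ with $c_{a_1}=M$ and $c_{a_2}=1$. The balance condition holds and $\mathrm{OPT}=1/M$.
\item The optimal fractional solution is $x_{a_1}=M/(M+1)$, $x_{a_2}=1/(M+1)$. Then $t$ is regular and both arcs are singular.
\item The forward path may arbitrarily choose $a_1$. The resulting nice alternating cycle ($a_1$ forward, $a_2$ backward) is augmented by $\min\{M/(M+1),\,1-1/(M+1)\}=M/(M+1)$. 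This deletes $a_1$ and routes the entire demand on $a_2$, giving congestion $1=M\cdot\mathrm{OPT}$.
\end{itemize}

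This rules out both of your routes. No sharpening of Theorem~\ref{thm:main_result_UT} can help, since the algorithm legitimately produces this output. Rescaling breaks the balance condition, since $c'_{a_2}=1/M<d_{\max}$. Pruning such arcs beforehand would have to depend on which sink uses them, which the algorithm cannot enforce. Your single-arc sanity check misses all this because the algorithm has no choice to make there.
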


Instead of working with a super-source~$s^*$ as we did above, one could just as well introduce a super-sink~$t^*$ and then reverse all arc directions.

\begin{remark}
Reversing all arcs yields a maximum capacity violation of at most~$\max_{s\in S^+} b(s)$. Thus, the Modified DGG Algorithm computes, in polynomial time, an unsplittable transshipment whose arc capacity violation is bounded by the minimum of the maximum supply~$\max_{s \in S^+} b(s)$ and the maximum demand~$\max_{t \in S^-} |b(t)|$.
\end{remark}

\subsection{Further observations}

\paragraph{Arc-wise lower bounds for unsplittable transshipments.}

Rohwedder and Svensson (personal communication, 2021) observed that the DGG Algorithm can be adapted to guarantee lower bounds on arc flows for unsplittable flows. By applying the same idea, one obtains an analogous result for the Unsplittable Transshipment Problem.

\begin{theorem}\label{thm:lower_bounds}
Given a~$b$-transshipment~$x$, the adapted Modified DGG Algorithm efficiently computes an unsplittable transshipment such that the flow on each arc~$a$ is greater than~$x_a - d_{\max}$.
\end{theorem}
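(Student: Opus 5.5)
We adapt the Modified DGG Algorithm in the same way Rohwedder and Svensson adapt the original DGG Algorithm, and then repeat the analysis of Theorem~\ref{thm:main_result_UT} with the roles of increases and decreases swapped. The adaptation changes the moving rule for sinks along singular arcs. In the original rule, a (sub-)sink moves along a singular arc only if the arc carries exactly its demand. In the adapted rule, once an arc~$a$ becomes singular we record the total demand~$D_a$ already routed across it. We then let a (sub-)sink~$t$ with~$y_a \le d_t$ (in particular the irregular/regular sink at the head) be moved along~$a$ as soon as this is still needed, and we delete~$a$ afterwards. The aim is that no singular arc ever ends up carrying less than~$x_a - d_{\max}$ in the final routing.

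The proof splits arc flows into two phases. The first phase is before an arc becomes singular. As argued in the proof of Theorem~\ref{thm:main_result_UT}, non-singular arcs only lose flow through forward augmentations, and sinks moving across them reduce~$y_a$ by exactly their demand. So at all times~$x_a \le (\text{demand routed along }a) + y_a + (\text{total decrease by forward augmentations})$. Forward decreases happen only while the arc is non-singular. The key point is that in this adapted version a non-singular arc can be a forward arc of an augmenting cycle only if it does not drop to zero unless all its flow is accounted for. To ensure this, we choose the augmentation amount so that forward arcs never drop below the flow that must still be routed. Concretely, we also cap the augmentation by~$\min y_a - \epsilon$ bookkeeping, mirroring the Rohwedder--Svensson idea of pushing sinks forward (``reverse'' direction). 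The second phase is after the arc becomes singular. The arc stays singular until it is removed (vertex out-degrees never increase), its flow only increases, and by Invariant~\ref{inv:moving_removes_sing} at most one sink is routed across it before removal.

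The cleanest way to realize this is symmetric to Theorem~\ref{thm:main_result_UT}. By the preceding remark, the reversed instance (with super-sink~$t^*$) behaves the same way, and the lower-bound analysis needs that a singular arc~$a$ with current flow~$y_a$ receives, at its removal, a sink of demand~$d_t \ge y_a$. Here~$t$ is the first regular sink downstream, as in the proof of Theorem~\ref{thm:main_result_UT}, and flow conservation gives~$y_a \le d_t$. If instead we insist that the sink moved along~$a$ has demand at least~$y_a$, the final flow on~$a$ is~$(\text{demand before singular}) + d_t \ge x_a - y_a^{\text{sing}} + y_a$. Moreover,~$y_a^{\text{sing}} < d_{\max}$ because the arc only became singular after its flow fell below the regular sink's demand. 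This gives~$y_a > x_a - d_{\max}$.

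The singular-digraph steps of the Modified algorithm need a separate check. There, sinks are routed through the in-tree by a path decomposition of~$y$ itself, so every arc of the singular digraph receives exactly its current flow~$y_a$. Its final load is therefore at least its initial flow minus forward decreases, which are~$0$ once it is singular. The arc~$a'$ receives exactly~$y_{a'}$ through~$t_1$. So these steps lose nothing. The main obstacle I expect is the precise form of the adapted moving rule. It must ensure that, when a singular arc is deleted, the sink moved across it has demand at least the arc's current flow, while keeping Invariant~\ref{inv:vertex_w_sinks_2_inc} intact. I would first try the rule ``move the regular sink~$t$ at the head along the singular arc with largest flow whenever no augmentation is possible'' and re-check the proof of Invariant~\ref{inv:vertex_w_sinks_2_inc} for it.
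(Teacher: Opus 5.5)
There is a genuine gap. You never pin down the adaptation, and the one you gesture at cannot give the bound. You keep the DGG augmentation direction, change how sinks cross singular arcs, and cap the augmentation amount.

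The step that fails is your accounting $(\text{demand routed along } a \text{ before it is singular}) \ge x_a - y_a^{\mathrm{sing}}$. With the original direction, the flow on every forward arc of a nice alternating cycle is decreased, and forward paths use non-singular arcs. Such an arc can be a forward arc in many iterations, and the flow it loses there is not routed by any sink; it may even be driven to zero and deleted by an augmentation. The correct identity is $(\text{demand routed before singularity}) = x_a - (\text{forward decreases}) - y_a^{\mathrm{sing}}$. Nothing in your proposal bounds the cumulative forward decrease by anything like $d_{\max}$. The ``cap'' that keeps forward arcs above ``the flow that must still be routed'' is not a defined quantity, and any positive augmentation removes flow from every forward arc. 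Changing the moving rule on singular arcs cannot recover flow that vanished from $a$ while it was non-singular.

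Two smaller points. Singular arcs also occur on forward paths (e.g.\ the arc $a'$ entering the junction), so their flow does not only increase. And the Remark on reversing all arcs swaps the roles of supplies and demands; it does not turn upper bounds into lower bounds.

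The missing idea, which is the paper's entire adaptation, is to flip the augmentation along nice alternating cycles. You increase the flow on forward arcs and decrease it on backward arcs, keeping the preliminary phase, the moving rules, and the singular-digraph step unchanged. Every backward arc is singular, since it either enters a junction vertex or lies on a backward path of singular arcs. So while $a$ is non-singular, its flow can only grow through augmentations.

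If $a$ ever becomes singular, the demand already routed across it is $x_a + (\text{increases}) - y_a^{\mathrm{sing}} \ge x_a - y_a^{\mathrm{sing}}$. Moreover $y_a^{\mathrm{sing}} \le y_{a_v} < d_t \le d_{\max}$ by the regular-sink argument from the proof of Theorem~\ref{thm:main_result_UT}. If $a$ never becomes singular, then all of $x_a$ plus the increases is eventually routed across it.

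Whatever is routed across $a$ after it becomes singular only adds to the final flow. So no constraint on how sinks cross singular arcs is needed, and the obstacle you anticipate disappears once the direction of augmentation is reversed.
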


As with the original DGG Algorithm, all steps remain the same except the augmentation step when a nice alternating cycle is found. In that case, we augment in the opposite direction: we increase the flow on the forward paths and decrease it on the backward paths.

\paragraph{Tightness of the Modified DGG Algorithm.}
For the SSUF problem, Dinitz, Garg, and Goemans~\cite{DGG} show that a capacity violation of~$d_{\max}$ is unavoidable in the worst case. Since SSUF is a special case of the unsplittable transshipment problem, this bound is also tight for unsplittable transshipments when each sink is reachable from a single source. The tightness of the bound is less immediate in highly connected instances, where each sink is reachable from many sources, and its demand could, in principle, be split so as to reduce congestion. We show that this intuition is misleading: the bound of~$d_{\max}$ remains tight even when every sink is connected by directed paths to an arbitrarily large number of sources. A concrete construction establishing this result is given in Appendix~\ref{app:sec:tightness}.

\paragraph{Bounding the total number of paths.}
By definition, an arbitrary unsplittable transshipment may use up to~$|S^+|\cdot|S^-|$ distinct paths. However, the following theorem implies that any unsplittable transshipment produced by the Modified DGG Algorithm uses at most~$|S^+| + |S^-| - 1$ paths. We refer to Appendix~\ref{app:sec:number_of_paths} for a proof.

\begin{theorem}
\label{thm:tree}
Given an unsplittable transshipment, define a bipartite graph with vertex set~$S^+ \cup S^-$ by adding an edge between a source~$s \in S^+$ and a sink~$t \in S^-$ if and only if the transshipment routes flow along an~$s$--$t$ path. Then, the bipartite graph associated with any unsplittable transshipment computed by the Modified DGG Algorithm is acyclic, and thus contains at most~$|S^+| + |S^-| - 1$ edges.
\end{theorem}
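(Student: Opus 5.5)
We show acyclicity of the bipartite source--sink graph $H$ by tracking, over the run of the Modified DGG Algorithm, the graph $H$ together with an auxiliary structure on the current network. Throughout, we view each (sub-)sink as a token moving backward toward~$s^*$; the last arc it traverses before reaching $s^*$ is a dummy arc $(s^*,s)$, and this determines the source $s$ it is paired with. A sink $t$ that was never split is routed along a single path and contributes exactly one edge to $H$. New edges beyond this arise only from splits, which occur at exactly one place: the singular digraph step. There the chosen sub-sink $t$ at the root $w$ is split into $t_1$ and $t_2$, and the (sub-)sinks of the singular digraph are further split according to a path decomposition of $y$ restricted to the singular digraph.

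The key step is to show that one singular digraph iteration adds to $H$ a forest whose edges connect to the previously existing components in a way that cannot close a cycle. By Lemma~\ref{lem:singular-digraph}, the singular digraph minus $s^*$ is an in-tree $T$ rooted at $w$. Its leaves are sources, attached via dummy arcs, and its (sub-)sinks sit at vertices of $T$. Since the flow on $T$ exactly meets the demands in $T$ (with $t$ counted at $d_{t_2}$), I would choose a specific path decomposition instead of an arbitrary one: a ``northwest corner'' rule along a DFS order of $T$. Concretely, process the sources (leaves) and sinks in a fixed depth-first ordering of the tree and greedily match supply to demand. The standard transportation-tree argument then shows that, restricted to the sources and sinks of $T$, the resulting bipartite graph is a forest. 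For a tree, indeed, a single interval-matching ordering yields a path-like structure.

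Next comes gluing across iterations. Each source $s$ appears in exactly one singular digraph or contributes via single paths. However, a source's dummy arc $(s^*,s)$ is removed once it belongs to a singular digraph, and before that, its supply is consumed only by unsplit tokens moving along singular or non-singular arcs. I would argue inductively that, at each point in time, the components of the partial $H$ correspond to subsets of sources and sinks. The new pieces attach by at most one vertex per component. In particular, the sink $t$ is shared between the new forest (via $t_2$) and its future routes (via $t_1$), and $t_1$ never re-enters the deleted vertices (as in the proof of Lemma~\ref{lem:confluent}). Thus each split contributes a tree that meets the rest of $H$ only at the single vertex $t$, or at sources whose dummy arcs are now gone. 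Cycles are therefore impossible.

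The main obstacle is making the gluing rigorous. One must verify that sources used in a singular digraph are not later reused by other tokens, since the dummy arc must have been deleted. One must also verify that the sources in $T$ were previously connected in $H$ only to sinks in a way consistent with a forest. I would first try to prove the invariant ``each source with a remaining dummy arc lies in a component of $H$ containing at most one still-active (sub-)sink's original sink'' and check that it is preserved by both nice-cycle iterations and singular digraph iterations. Once acyclicity holds, the edge bound $|S^+|+|S^-|-1$ follows because $H$ is a forest on $|S^+|+|S^-|$ vertices.
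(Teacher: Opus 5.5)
\textbf{There is a genuine gap, in both halves of the argument.} You are right that the decomposition on the in-tree $T$ must be chosen carefully and that acyclicity should come from an inductive invariant on the components of $H$. But neither of the two concrete ingredients you propose works.

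\textbf{First gap: the decomposition.} The northwest-corner/interval rule along a DFS order is never shown to give a \emph{valid} decomposition. A source can only serve sinks on its own path to $w$, and a greedy interval matching along a DFS order can hand the leftover supply of one branch to a sink in a sibling branch. For some in-trees no interval ordering works at all. Let $w$ have children $u_1,u_2,u_3$, each fed by two leaf sources of supply $1$, each carrying a sink of demand $\tfrac32$, and each forwarding $\tfrac12$ to $w$. Then every valid decomposition gives the sink at $w$ exactly $\tfrac12$ from each branch. However, a contiguous interval of length $\tfrac32$ that meets three unit source intervals must contain the middle one completely. What does work is the paper's leaf elimination:
\begin{enumerate}
\item Take a leaf source $s$ with outgoing arc $a$.
\item Go to the first (sub-)sink $t'$ on its path to $w$, choosing the designated $t$ at $w$ only when no other (sub-)sink is located there.
\item Route $\min\{d_{t'},y_a\}$.
\end{enumerate}
Thus every elementary step either completes $t'$, or deletes $a$ and thereby retires $s$.

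\textbf{Second gap: the gluing.} Your gluing argument does not exclude cycles. The sources and sinks of $T$ may already carry edges of $H$ from earlier iterations. So ``meets the rest of $H$ only at $t$ or at sources whose dummy arcs are gone'' is not the relevant property. What you need is that, before any new edge $\{s,t\}$ is added, $s$ and $t$ lie in different components of $H$.

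Your proposed invariant is too weak for this. It allows an active source $s$ and an active sink $t$ in one component, for instance via $s$--$t'$--$s'$--$t$ with $t'$ and $s'$ finished. Routing a piece of $t$ to $s$ then closes a cycle.

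The invariant that works is the following. $H$ is a forest, and every component contains at most one \emph{active} vertex, where:
\begin{itemize}
\item a source is active while it still has an outgoing arc;
\item a sink is active while some sub-sink of it has not reached $s^*$ (there is at most one such sub-sink at any time).
\end{itemize}
Both endpoints of a new edge are active, hence lie in distinct components. The invariant survives because each elementary routing retires either $t$ or $s$:
\begin{itemize}
\item $t$ is retired by whole-sink moves in the preliminary phase and in nice-cycle iterations, and in the tree when $d_{t'}\le y_a$;
\item $s$ is retired in the split case, where exactly $y_a$ is routed and $s$ loses its only outgoing arc.
\end{itemize}

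With this invariant, your block-wise view would also go through for any valid decomposition of $T$ whose bipartite graph is a forest. The edge-by-edge version, however, makes a separate forest argument inside $T$ unnecessary.
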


\paragraph{Cost of Confluence.}
Lemma~\ref{lem:confluent} implies that the Modified DGG Algorithm returns a sink-wise confluent unsplittable transshipment. This leads to the question of whether such confluence has an inherent cost: are there instances with a feasible (non-confluent) unsplittable transshipment for which every confluent unsplittable transshipment has congestion close to~$d_{\max}$, i.e., the largest possible gap?

The answer is yes, as demonstrated by a family of instances parameterized by~$q$, illustrated for~$q=4$ in Figure~\ref{fig:cost_of_confluence}.
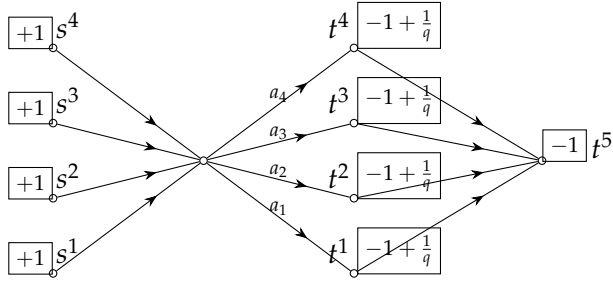
\begin{figure}[t] 
\centering
\tikzset{
    dot/.style={circle,draw,inner sep=1pt},
    lab/.style={draw=none,font=\small,inner sep=1pt},
    midarrow/.style={postaction={decorate,decoration={markings,mark=at position 0.7 with {\arrow{Stealth}}}}}
}

\begin{tikzpicture}

% -------------------- Nodes --------------------
% Left column s^i
\foreach \i [count=\y from 5] in {1,2,3,4} {
    \node[dot] (s\i) at (2,\y) {};
    \node[draw, scale=0.75] at (1.7,\y+0.2) {$+1$};
    \node at (2.2,\y+0.3) {$s^\i$};
}

% Right column t^i
\foreach \i [count=\y from 5] in {1,2,3,4} {
    \node[dot] (t\i) at (6,\y) {};
    \node[draw, scale=0.75] at (6.6,\y+0.3) {$-1+\frac{1}{q}$};
    \node at (5.8,\y+0.3) {$t^\i$};
}

% Center node
\node[dot] (c) at (4, 6.5) {};

% Far right node
\node[dot] (t5) at (8.5,6.5) {};
\node[draw, scale=0.75] at (8.8,6.7) {$-1$};
\node at (9.3,6.7) {$t^5$};

% -------------------- Edges --------------------
% Left -> center
\foreach \i in {1,2,3,4} {
    \draw[midarrow] (s\i) -- (c);
}

% Center -> right column with labels
\foreach \i/\a/\pos in {1/1/above,2/2/above,3/3/below,4/4/below} {
    \draw[midarrow] (c) -- node[lab,above,scale=0.75] {$a_\a$} (t\i);
}

% Right column -> far right
\foreach \i in {1,2,3,4} {
    \draw[midarrow] (t\i) -- (t5);
}

\end{tikzpicture}
\caption{Family of instances (depicted for~$q=4$) with unit arc capacities and~$d_{\max}=1$, admitting a feasible non-confluent unsplittable transshipment, while any confluent unsplittable transshipment has congestion~$1-\nicefrac1q$, tending to~$d_{\max}$ as~$q \to \infty$.}
\label{fig:cost_of_confluence}
\end{figure}
Each sink~$t^i$ with~$i\in \{1, \ldots, q\}$ must route its demand through arc~$a_i$. Thus, any unsplittable routing reserves~$1- \nicefrac{1}{q}$ units of capacity on~$a_i$ for sink~$t^i$. If we additionally require confluence for sink~$t^{q+1}$ (denoted~$t^5$ in Figure~\ref{fig:cost_of_confluence}), then some arc~$a_i$ must carry a total flow of~$1 + (1- \nicefrac{1}{q})$. Hence, in any confluent unsplittable transshipment, the capacity violation on that arc is at least~$1- \nicefrac{1}{q}$, which tends to~$d_{\max} = 1$ as~$q \to \infty$.

%----------
% Section 4
%----------

\section{Minimum number of rounds and maximum routable demand}
\label{sec:beyond_congestion}

In this section, let the digraph~$D=(V,A)$ have arc capacities~$c_a \geq 0$,~$a\in A$. We impose the \emph{balance condition} that~$d_{\max}\coloneqq \max_{t\in S^-}|b(t)| \leq \min_{a \in A} c_a =: c_{\min}$ and assume that a feasible fractional transshipment exists for the given unsplittable transshipment instance. Let~$d_{\tot}\coloneqq \sum_{t\in S^-}|b(t)|$ denote the total demand of all sinks, and let~$b^+_{\avg}\coloneqq \sum_{s \in S^+}\nicefrac{b(s)}{|S^+|}= \nicefrac{d_{\tot}}{|S^+|}$ denote the average supply of a source. Note that we make no assumptions about the supply values; in particular, sources and sinks need not be symmetric with respect to supply and demand values.

In the context of unsplittable transshipments, the \emph{minimum number of rounds} problem asks for a partition of the set of sinks into a minimum number of subsets (rounds), together with a feasible unsplittable transshipment for each subset, such that for every source the total flow sent to sinks over all rounds equals its prescribed supply. Moreover, the \emph{maximum routable demand} problem seeks a feasible unsplittable transshipment that satisfies the demands of a subset of sinks, maximizes the total demand of this subset, and does not exceed the available supplies at the sources.

\subsection{Minimizing the number of rounds}\label{subsec:number_of_rounds}

\paragraph{SSUF versus unsplittable transshipments.} Under the balance condition, it is shown in~\cite{DGG} that all demands in an SSUF instance can be routed in polynomial time using at most five rounds, by partitioning demands into \emph{small} and \emph{large}, and routing each class separately on suitably capacity-scaled copies of the network.

For unsplittable transshipments, however, routing in rounds poses additional challenges because rounds are \emph{not independent}: the supply of a source consumed in earlier rounds reduces the supply available in later rounds. This cross-round `memory' has no analogue in the SSUF setting. Moreover, unlike in SSUF, where all indivisible demands are known in advance, the Modified DGG Algorithm may generate arbitrarily small sub-demands during its execution, so that no meaningful global partition of demands into `small' and `large' classes exists. Furthermore, scaling down capacities does not guarantee that at most one demand uses a given arc. As a consequence, feasibility arguments based on capacity scaling, such as those used in~\cite{DGG}, no longer apply. Routing in rounds for unsplittable transshipments therefore requires more refined techniques.

For the case in which demand values are sufficiently smaller than~$c_{\min}$, we prove bounds on the number of rounds for several such regimes in Appendix~\ref{app:finite-rounds_routing}. In the following, we present a unified strategy that routes all instances with~$d_{\max} < c_{\min}$ in a number of rounds that depends only on the gap between~$d_{\max}$ and~$c_{\min}$.

\paragraph{Routing demands in rounds \texorpdfstring{when~$d_{\max}<c_{\min}$}{}.}

We may assume without loss of generality that~$d_{\max}\leq (1-\nicefrac{1}{n})c_{\min}$ for some~$n\in\mathbb{N}$. Following the construction of~\cite[Fig.~6]{DGG}, we construct an auxiliary digraph~$D'=(V',A')$ consisting of~$n+1$ identical copies of~$D$, where each copy~$a'$ of an arc~$a\in A$ has capacity~$c'_{a'} \coloneqq \nicefrac{c_a}{(n+1)}$. For each sink~$t \in S^-$, we add a super-sink~$T$ and connect every copy of~$t$ to~$T$ with an arc of capacity~$\nicefrac{d_t}{(n+1)}$. Symmetrically, for each source~$s \in S^+$ we introduce a head-source~$S$ and connect every copy of~$s$ to~$S$ with an arc of capacity~$\nicefrac{b(s)}{(n+1)}$, and finally connect every head-source~$S$ to a global super-source~$S^*$ by an arc of capacity~$b(s)$. Routing a fraction~$\nicefrac{1}{(n+1)}$ of the original feasible flow through every copy yields a single source flow from~$S^*$ to the super-sinks in~$D'$, to which we apply the Modified DGG Algorithm.

Since~$c'_{a'} + d_{\max} \leq \nicefrac{c_a}{(n+1)} + \left(1-\nicefrac{1}{n}\right) c_{\min} \leq c_a$, each copy of~$D$ is individually feasible with respect to the original capacities. However, we cannot directly interpret each copy as a routing round, because the Modified DGG Algorithm may split the demand of a sink across multiple copies of~$D$ which is incompatible with the notion of routing in rounds. We therefore distinguish two categories of sinks.

A sink is \emph{critically split} if its demand is divided between forward path(s) contained in a single copy of~$D$ and a \emph{singular digraph} that may span multiple copies; otherwise, it is \emph{non-critically split} and its entire demand is routed within one copy, which can be interpreted directly as a round where such a sink is routed.

\begin{observation}\label{obs:one-css-per-copy}
In any fixed copy of~$D$, each arc~$a'$ can be used in the singular digraph of at most one critically split sink. Once used,~$a'$ is deleted and cannot be included in further singular digraphs.
\end{observation}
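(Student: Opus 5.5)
The plan is to read the observation directly off the way the Modified DGG Algorithm processes a singular digraph, as described in Step~\ref{step:singular}, combined with Lemma~\ref{lem:singular-digraph}. Run on the auxiliary digraph~$D'$, the algorithm discovers singular digraphs one at a time, one per iteration in which no nice alternating cycle exists. Each such iteration selects exactly one (sub-)sink~$t$ at the junction vertex~$w$ and splits it into~$t_1$ and~$t_2$. Only this sink can become critically split in that iteration: every other (sub-)sink located in the singular digraph is routed entirely inside it, using paths that form an in-tree by Lemma~\ref{lem:singular-digraph}. So each singular digraph is associated with at most one critically split sink, and an arc~$a'$ of a fixed copy of~$D$ that lies in this singular digraph serves at most that one critically split sink during this iteration.

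The second step handles different iterations. At the end of the iteration in which a singular digraph is processed, the algorithm deletes all of its arcs, including every arc~$a'$ belonging to a copy of~$D$. This is the same fact used in the proof of Invariant~\ref{inv:moving_removes_sing}. Because the algorithm never adds arcs, a deleted~$a'$ cannot reappear. Hence~$a'$ cannot lie in any later singular digraph, which gives the second sentence of the observation.

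Combining the two steps yields the first sentence: $a'$ belongs to at most one singular digraph over the whole run, and that singular digraph carries at most one critically split sink. This holds in every copy of~$D$.

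The main obstacle is making sure the uniqueness within one singular digraph is stated correctly. A sink that was split earlier, or a sub-sink~$t_1$ created in a previous iteration, might sit at a vertex of the current singular digraph without being the selected sink. For such a sink, I would argue that the portion of its demand routed here travels along an in-tree, and the critical split was created by its own earlier singular digraph. For counting purposes, the arc~$a'$ is therefore charged only to the sink selected at~$w$ in the current iteration. The key fact to check is that, besides this selected sink, no (sub-)sink present at the vertices has its demand divided between forward paths and this singular digraph.
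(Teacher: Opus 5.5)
Your argument is correct and is essentially the paper's own justification: each singular digraph is attributed to the single (sub-)sink $t$ that is split at its root $w$. All arcs of a processed singular digraph are deleted at the end of that iteration (the same fact behind Invariant~\ref{inv:moving_removes_sing}), and no arcs are ever added, so no arc of a copy of $D$ can lie in two singular digraphs. The ``key fact to check'' in your last paragraph is already settled by the consequence of Lemma~\ref{lem:singular-digraph} noted in the paper: the in-tree flow exactly meets the demands of all other (sub-)sinks at its vertices, so only the selected sink has its demand divided between the forward arc into $w$ and the singular digraph.
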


This observation allows us to group the critically split sinks in a controlled manner. Each such sink~$t$ is associated with a \emph{label}
$\ell\in\{1,\dots,n+1\}$ identifying the copy of~$D$ containing its forward path(s), and a \emph{demand-share vector}~$\theta^t=(\theta^t_1,\dots,\theta^t_{n+1})$, where~$\theta^t_\alpha:=\nicefrac{d_t^\alpha}{d_t}$ denotes the fraction of demand routed through copy~$\alpha$ of~$D$.

The demand-share vectors~$\theta^t$ lie in the~$n$-dimensional simplex. To group sinks of the same label with similar demand distributions, we discretize this simplex using a uniform grid of granularity~$\varepsilon \leq \nicefrac{1}{(n^2-1)(n+1)}$. For each coordinate, the range~$[0,1)$ is sliced into intervals of size~$\varepsilon$. For~$M \coloneqq (n^2-1)(n+1)$, these intervals are of the form~$[k\varepsilon,(k+1)\varepsilon)$ for~$k\in\{0,\dots,M-1\}$. 

A \emph{group} is defined by a label and a choice of one interval per coordinate,
\[
\bigl(\,\ell,\; ([a_1,u_1),\ldots,[a_{n+1},u_{n+1}) \,\bigr)^T,
\qquad
u_\alpha=a_\alpha+\varepsilon \quad \text{for }\alpha=1,\ldots,n+1.
\]
Such a group can contain a demand-share vector if and only if
\[
\sum_{\alpha=1}^{n+1}a_\alpha \leq 1 \quad \text{and} \quad 
1 \leq \sum_{\alpha=1}^{n+1}u_\alpha= \sum_{\alpha=1}^{n+1}(a_\alpha + \varepsilon)= \sum_{\alpha=1}^{n+1}a_\alpha + (n+1)\cdot \varepsilon,
\]
thus, 
$
1-(n+1)\varepsilon \leq \sum_{\alpha=1}^{n+1} a_\alpha \leq 1,
$
or equivalently, by multiplying with~$M$ and using~$a_\alpha=k_\alpha\varepsilon$,  
\[
M-(n+1) \leq \sum_{\alpha=1}^{n+1} k_\alpha \leq M.
\]

Hence, for a fixed label, admissible groups correspond to non-negative integer~$(n+1)$ tuples~$(k_\alpha)$ whose total sum lies in~$\{M - (n+1), \ldots, M  \}$. By the stars-and-bars argument, the total number~$N_{\tot}$ of such groups for a fixed label is
\[
N_{\tot} = \sum_{j=0}^{n+1} \binom{M-j + n}{n} = \sum_{j=0}^{n} \binom{(n^2-1)\cdot(n+1)-j + n}{n} < \infty.
\]

We now show that superimposing all paths of critically split sinks within a fixed group yields a feasible routing round. Fix an arc~$a\in A$. In the worst case, each copy~$\alpha$ of~$a$ carries the maximum group demand~$u_\alpha$ allowed by its interval. By
Observation~\ref{obs:one-css-per-copy}, each copy contributes to at most one singular digraph, so the total flow on~$a$ is strictly less than
\[
\frac{c_a}{(n+1)} + d_{\max}\sum_{\alpha=1}^{n+1} u_\alpha \ 
\]
with the first term coming from the copy that contains the forward path(s) of the sinks of the group. Since
\[
\sum_{\alpha}u_\alpha = \sum_{\alpha}a_\alpha + (n+1)\cdot\varepsilon \leq 1+(n+1)\varepsilon
\quad\text{and}\quad
\varepsilon\le \frac{1}{(n^2-1)(n+1)},
\]
we obtain
\[
\frac{c_a}{(n+1)}+ \left(1- \frac{1}{n}\right)\cdot c_{\min}\cdot \bigl(1 + (n+1)\cdot\varepsilon\bigr) 
\ \leq \ 
c_a \cdot \left( \frac{1}{(n+1)}+ \frac{(n-1)}{n} + \frac{(n^2-1)\cdot \varepsilon}{n}\right) 
\ \leq \ c_a.
\]
This implies that all capacity constraints are satisfied. We conclude that the critically split sinks require at most~$(n+1)\cdot N_{\tot}$ rounds. Adding the~$n+1$ rounds needed for non-critically split sinks, the total
number of routing rounds is~$(n+1)\cdot(N_{\tot}+1)$.

\subsection{Special cases for the maximum routable demand} 
            
When~$d_{\max} < c_{\min}$, the maximum routable demand admits an approximation factor depending on the gap between~$d_{\max}$ and~$c_{\min}$. The boundary case~$d_{\max} = c_{\min}$ remains open, both for routing in rounds and for maximizing routable demand. We now identify settings where additional constraints on source supplies yield constant-factor approximations for the maximum routable demand.

\paragraph{Assuming~$d_{\max} \leq \min_{s \in S^+}b(s)$.} Under this assumption, the derived SSUF instance with super-source~$s^*$ satisfies the balance condition, since each dummy arc~$(s^*, s)$ has capacity~$b(s) \geq d_{\max}$. By~\cite[Corollary 5.4]{DGG}, at least~$0.226\cdot d_{\tot}$ units of demand can be routed in this SSUF instance without violating arc capacities, and thus also in the original unsplittable transshipment instance without exceeding source supplies. Hence, there exists a feasible unsplittable routing that serves at least~$0.226\cdot d_{\tot}$ total demand, with no demand split across sources.

\paragraph{Assuming~$d_{\max} \leq \nicefrac{1}{2} \cdot b^+_{\mathrm{avg}}$.} When~$d_{\max} \leq \nicefrac{c_{\min}}{3}$ (i.e., for \emph{small} demands), there exists a feasible six-round unsplittable routing; see Appendix~\ref{app:finite-rounds_routing}. Hence, in this case, at least~$\nicefrac{1}{6}\cdot d_{\tot}$ can be routed unsplittably in a single round. We therefore restrict our attention to the complementary case of \emph{large} demands in~$[\nicefrac{d_{\max}}{3}, d_{\max}]$, which in particular contains~$[\nicefrac{c_{\min}}{3}, d_{\max}]$.

Assume in the following that demands lie in the range~$[\nicefrac{d_{\max}}{3},\, d_{\max}]$ and that we are additionally given
$
d_{\max} \leq \nicefrac{1}{2}\cdot b^+_{\mathrm{avg}}
= \nicefrac{d_{\tot}}{2|S^+|}.
$
Applying the original DGG algorithm to the derived SSUF instance with a super-source routes all demands unsplittably in four rounds~\cite[Lemma~4.2]{DGG}. When applied to unsplittable transshipment instances, individual sources may be overutilized, but by at most one sink~\cite[Theorem~3.7]{DGG}, so the overutilization at any source is less than~$d_{\max}$. Thus, the total overutilization is less than~$|S^+|\cdot d_{\max}$, which by assumption is at most~$\nicefrac{1}{2}\cdot d_{\tot}$. Therefore, at least~$\nicefrac{1}{2}\cdot d_{\tot}$ units of demand are routed without violating any source supply across the four rounds, and hence at least one round routes a feasible unsplittable flow of value at least
$\nicefrac{1}{4}\cdot \nicefrac{1}{2}\cdot d_{\tot}
= \nicefrac{1}{8}\cdot d_{\tot}$.
Thus, combining the cases of \textit{small} and \textit{large} demands, the maximum routable demand is lower bounded by~$\max \left\{ \nicefrac{d}{6}, \nicefrac{1-d}{8} \right\} \cdot d_{\tot}$ for~$d \in [0,1]$, which attains its minimum at~$d = \nicefrac{3}{7}$ and is therefore always at least~$0.07143\cdot d_{\tot}$.

%----------
% Section 5
%----------

\section{Conclusion and open problems}
\label{sec:conclusion}

We have introduced unsplittable transshipments as a natural extension of single source unsplittable flows. Our results and techniques indicate that unsplittable transshipment problems pose new and interesting algorithmic challenges and are inherently more difficult than their single source counterparts. We conclude by highlighting several open questions and problems that may stimulate future research.

The Modified DGG Algorithm can be adapted to guarantee lower bounds on arc flows in unsplittable transshipments; see Theorem~\ref{thm:lower_bounds}. Whether arc-wise upper and lower bounds can be achieved simultaneously remains an intriguing open question, even in the special SSUF case. Extending a conjecture for SSUF by Morell and Skutella~\cite{MoSk_2022}, we conjecture that such transshipments exist and can be computed efficiently.

\begin{conjecture}
    Given a~$b$-transshipment~$x$, one can efficiently compute an unsplittable~$b$-transshipment~$y$ such that
    \[x_a - d_{\max} \ \leq \ y_a \ \leq \ x_a + d_{\max} \quad \text{for all } a\in A.\]
\end{conjecture}

Motivated by a famous conjecture of Goemans, many authors have studied cost-based variants of SSUF; see, e.g.,~\cite{Kolliopoulos_3_2_SSUF,Skutella_Apprx_SSUF_3_1,traub2024single,swamy2026unsplittable}. Generalizing some of these results to unsplittable transshipments poses an interesting challenge.

In Section~\ref{sec:beyond_congestion}, we study the problem of minimizing the number of rounds while imposing the balance condition~$d_{\max}\leq c_{\min}$. Our techniques exploit the additional restriction to~$d_{\max}< c_{\min}$, allowing one to sufficiently separate the demands from~$c_{\min}$. It remains open whether explicit bounds can be obtained even if~$d_{\max} = c_{\min}$, both for routing in rounds and for maximizing the routable demand. Other variants, such as constant-factor approximations for minimizing the number of rounds, may also provide further insight.

Another related model is the \emph{$k$-splittable flow} problem, introduced by Baier, Köhler, and Skutella \cite{baier2005k}, in which given supplies and demands must be routed using at most~$k$ paths per source--sink pair. They studied the maximum $k$-splittable $s$--$t$ flow problem, establishing tight approximation guarantees for small values of~$k$ and polynomial-time algorithms for maximum uniform $k$-splittable $s$--$t$ flows. Subsequent work investigated congestion- and cost-based variants in the single source setting, including approximation results for~$k=2$ by Kolliopoulos~\cite{kolliopoulos2005minimum} and for general~$k$ by Salazar and Skutella~\cite{salazar2009single}.

The model extends naturally to \emph{$k$-splittable $b$-transshipments}, where each source--sink pair may use at most~$k$ paths. By pre-splitting sink demands and applying the Modified DGG Algorithm, one obtains congestion bounded by $\nicefrac{d_{\max}}{k}$ for the $k$-splittable transshipment problem. Other optimization variants that arise in the context of $k$-splittable flows remain open for the $k$-splittable transshipment setting, suggesting several promising directions for future research.
   
%\newpage

%----------
% Bibligraphy
%----------

\printbibliography 
% in footnotesize (see format.tex to change)

%----------
% Appendix
%----------

\begin{appendices}

\section{Hardness result}
\label{app:hardness}

\begin{lemma}
    Given an unsplittable transshipment instance with integral supplies and demand values and unit arc capacities, it is \texttt{NP}-complete to decide whether there exists a feasible unsplittable transshipment.
\end{lemma}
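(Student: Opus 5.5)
Membership in NP is the easy half, so I would dispatch it first. A certificate consists of a set of at most $|S^+|\cdot|S^-|$ source--sink paths, one per pair used, each of length at most $|V|$. Once the paths are fixed, the flow values on them are not given, but feasibility becomes a linear program: one variable per chosen path, nonnegativity, supply/demand equalities, and arc capacity constraints. This LP can be checked in polynomial time. Indeed, the optimal basic solution has polynomial encoding size, so one can even take the path flow values themselves as part of the certificate.

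For NP-hardness I would reduce from a strongly NP-hard packing problem whose integrality is forced by unit capacities. Plan A is \textsc{3-Partition}. Given integers $a_1,\dots,a_{3m}$ with $\sum a_i = mB$ and $B/4<a_i<B/2$, build one source $s_i$ with supply $a_i$ for each item, and $m$ sinks $t_1,\dots,t_m$ with demand $B$ each. The network must force each source to send all its supply to a single sink, despite the fact that unsplittable transshipments allow a source to split across different sinks. This is the main obstacle. Supplies are integral, but arc capacities are unit, so a supply of $a_i$ has to leave $s_i$ over $a_i$ parallel unit arcs (or internally disjoint paths). I would therefore replace each item by a gadget and each bin by a gadget, and route item-to-bin connections through a layer where a source can reach a sink only via a bundle of unit-capacity paths. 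Unsplittability then constrains only one path per pair, which is a problem: a single path through unit arcs carries at most $1$.

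Given that, a cleaner alternative (Plan B) is to reduce from arc-disjoint paths / edge-disjoint paths with two commodities, which is NP-hard in directed graphs. Take the instance $(s_1,t_1),(s_2,t_2)$ in digraph $G$ with unit capacities, give $s_1,s_2$ supply $1$ and $t_1,t_2$ demand $1$. The danger is cross-routing $s_1\to t_2$, and the fix is to break symmetry with supplies. Give $s_1$ supply $2$ and $t_1$ demand $2$, attached via two private unit arcs each to a pair of copies, or add integral values and private arcs so that any feasible solution must pair $s_1$ with $t_1$. For example, add a private arc $s_1\to t_2'$ that absorbs exactly the cross flow, and use the balance equations to show the cross amount must be $0$. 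Then show that the one-path-per-pair condition with unit capacities yields exactly two arc-disjoint paths, and that conversely such paths give a feasible unsplittable transshipment. The key steps, in order, are: build the gadget; prove the easy direction; and, in the hard direction, use flow conservation and capacity counting to show that each pair carries integral flow $1$ on its single path and no cross pairs are used. I expect designing the gadget that rules out fractional cross-routing, which the paper notes can be non-integral, to be the main difficulty. I would try the two-commodity version first, since there the counting argument is tightest.
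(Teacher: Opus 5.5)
There is a genuine gap: the hardness half never specifies a construction. Your NP-membership argument is fine, and directed arc-disjoint paths (even with two pairs) is a valid source problem. But no gadget is actually defined, and for the instance you do define, the hard direction fails. With $b(s_1)=b(s_2)=1=-b(t_1)=-b(t_2)$, nothing prevents $s_1$ from serving $t_2$ and $s_2$ from serving $t_1$, or both from splitting fractionally. Such routings can be feasible even when no arc-disjoint $s_1$--$t_1$ and $s_2$--$t_2$ paths exist, so the planned step ``no cross pairs are used'' cannot be proved by counting. The suggested repairs (supply $2$ only at $s_1$, a private arc to an undefined $t_2'$, showing the cross amount is $0$) are too imprecise to check. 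They also aim at the wrong target: cross-routing should not be ruled out.

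The missing idea is to make cross-routing mandatory and give it dedicated arcs. Under unit capacities every path of an unsplittable transshipment carries at most one unit, and each source has at most one path to each of the $k$ sinks. So if every source has supply $k$, it must send exactly one unit to every sink along a single path, including to its partner $t^i$. Concretely, set $b(s^i)=k$ and $b(t^i)=-k$, and add unit-capacity arcs $(s^i,t^j)$ for all $i\neq j$. For $k=2$ this means supplies $2$, demands $-2$, and new arcs $(s_1,t_2)$ and $(s_2,t_1)$.

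In any feasible solution, the $k$ paths serving the pairs $(s^i,t^i)$ each carry one unit, so unit capacities make them pairwise arc-disjoint. Assume without loss of generality that no arcs of $D$ enter sources or leave sinks (attach pendant terminals otherwise). Then these paths cannot use the new arcs and lie entirely in $D$. Conversely, given arc-disjoint $s^i$--$t^i$ paths in $D$, send one unit along each of them and one unit along each new arc $(s^i,t^j)$; this is a feasible unsplittable transshipment. This is the paper's reduction. Once this gadget is in place, your intended counting argument (``each pair carries exactly one unit on its single path'') goes through essentially verbatim.
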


\begin{proof}
The problem clearly belongs to~\texttt{NP}. To establish \texttt{NP}-hardness, we reduce the \texttt{NP}-complete \emph{arc-disjoint paths problem}~\cite{knuth1974} which is defined as follows: Given a directed graph~$D=(V,A)$ and a set~$\{(s^i,t^i)\}_{i=1}^k \subseteq V \times V$ of vertex pairs, the task is to determine whether there exist pairwise arc-disjoint paths~$P_1,\dots,P_k$ such that~$P_i$ is an~$s^i$--$t^i$ path for each~$i = 1,\dots,k$. 

A given arc-disjoint paths instance can be turned into an unsplittable transshipment instance with unit arc capacities as follows: add direct arcs~$(s^i,t^j)$ for all~$i\neq j$ and assign supply~$b(s^i)=k$ and demand~$b(t^i)=-k$, for~$i = 1,\dots,k$. It is then easy to check that there exists a feasible unsplittable transshipment if and only if there exist arc-disjoint~$s^i$--$t^i$ paths for~$i=1,\dots,k$.
\end{proof}

\begin{corollary}
Given an unsplittable transshipment instance, deciding whether there exists a feasible unsplittable transshipment in which all paths leaving the same source and all paths entering the same sink carry identical flow values is \texttt{NP}-complete.
\end{corollary}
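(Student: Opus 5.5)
The plan is to reuse the reduction from the arc-disjoint paths problem in the proof of the preceding lemma unchanged. The aim is to show that on the instances it produces, feasibility of an unsplittable transshipment is equivalent to feasibility of one with identical path values per source and per sink. NP-hardness then follows directly from the lemma, and only membership in NP needs a separate argument.

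\emph{Membership in NP.} A certificate is a collection of source--sink paths, at most one per pair. Once the paths are fixed, the identical-value requirement determines every flow value: if source~$s$ is the start of~$p_s$ paths, each of them carries exactly~$b(s)/p_s$. We then check in polynomial time that:
\begin{itemize}
\item each path's value computed from its source agrees with the value $|b(t)|/q_t$ computed from its sink~$t$, where~$q_t$ is the number of paths entering~$t$;
\item the resulting arc flows respect the capacities.
\end{itemize}
Every number involved is rational with polynomially bounded encoding, so this verification is polynomial.

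\emph{Hardness, forward direction.} Start from arc-disjoint paths~$P_1,\dots,P_k$. Route one unit along each~$P_i$, and one unit along each added direct arc~$(s^i,t^j)$ with~$i\neq j$. Then:
\begin{itemize}
\item each source~$s^i$ is the start of exactly~$k$ paths, each of value~$1$, for a total of~$k=b(s^i)$;
\item each sink~$t^j$ is the end of exactly~$k$ paths, each of value~$1$, receiving~$k=|b(t^j)|$;
\item there is exactly one path per source--sink pair;
\item capacities are respected, because the~$P_i$ are pairwise arc-disjoint and each direct arc is used only by its own one-unit path.
\end{itemize}
So this is a feasible unsplittable transshipment in which all paths at the same source, and all paths at the same sink, carry the identical value~$1$.

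\emph{Hardness, backward direction.} A feasible transshipment with identical values is in particular a feasible unsplittable transshipment. By the ``if and only if'' established in the lemma's reduction, arc-disjoint~$s^i$--$t^i$ paths therefore exist.

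\emph{Main obstacle.} The delicate part is the converse direction of the lemma, which the lemma only calls ``easy to check'' and which I would take as given. If it had to be verified, the argument is a counting one. Source~$s^i$ must send~$k$ units, and sink~$t^i$ can receive at most~$k-1$ units over the arcs~$(s^j,t^i)$, $j\neq i$, each with unit capacity. Hence at least one unit must reach each~$t^i$ through the original graph. Tightness of these capacity counts then forces the unique~$s^i$--$t^i$ paths to be arc-disjoint, presumably under the natural assumption that in the instance of~\cite{knuth1974} every~$s^i$ is a pure source and every~$t^i$ a pure sink.
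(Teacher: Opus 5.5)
Your argument is correct and matches what the paper intends (the corollary is stated right after the lemma without a separate proof). In the lemma's reduction, the solution built from arc-disjoint paths sends one unit along each of the $k^2$ paths, so it meets the identical-value requirement; conversely, any identical-value solution is in particular a feasible unsplittable transshipment, so the lemma's equivalence carries over verbatim, and your explicit NP-membership check with values forced to $b(s)/p_s$ is a fine addition.

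Your side sketch of the lemma's converse is not needed, and as stated it is incomplete. An $s^i$--$t^i$ path may use direct arcs as intermediate hops, e.g.\ $s^i\to t^j$ followed by original arcs from $t^j$ to $t^i$. So the counting only shows that all $k^2$ paths carry exactly one unit and are pairwise arc-disjoint. It does not show that the $s^i$--$t^i$ paths lie in the original graph.
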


\section{Non-integrality result}
\label{app:non-integrality}

Figure~\ref{fig:int_UT_instance_frac} presents an unsplittable transshipment instance with integral arc capacities and integral supplies and demands, following the construction in~\cite[Figure 4]{baier2005k}. The instance admits a feasible non-integral unsplittable transshipment, as shown in the figure. However, no feasible integral unsplittable transshipment exists. To see this, notice that the total demand is~$20$, and each path has capacity at most~$9$, so at least~$4$ paths are required. At vertex~$v$ (resp.~$w$), two paths must route flow upward and two downward. 

Consider the possibilities for integral flows along these paths: If any path carries flow strictly less than~$1$, then some other path must carry more than~$9$, violating capacity constraints. If all paths carry flow strictly greater than~$>2$, arcs of capacity~$2$ cannot be used, making the transshipment infeasible. If a path carries exactly~$1$ (resp.~$2$), another path must carry~$9$ (resp.~$8$), forcing a third to carry~$3=12-9$ (resp~$4=12-8$) along the arc of capacity~$12$. The remaining path would then carry~$7$ (resp.~$6$), again exceeding capacity at the arc of capacity~$15$ (resp.~$5$). 

Thus, at least one path must carry flow strictly between~$1$ and~$2$, implying that no integral unsplittable transshipment can satisfy all balance values and capacities.

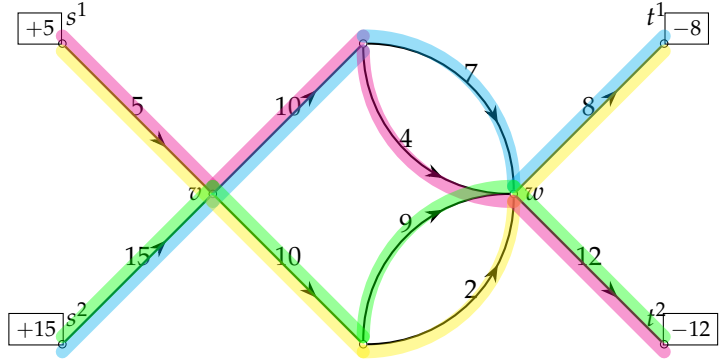
\begin{figure}[t]
\centering
\tikzset{midarrow/.style={postaction={decorate},decoration={markings,mark=at position 0.7 with {\arrow{Stealth}}}}}

\tikzset{
    dot/.style={circle,draw,inner sep=1pt},
    lab/.style={draw=none,font=\small,inner sep=1pt},
    midarrow/.style={postaction={decorate,decoration={markings,mark=at position 0.7 with {\arrow{Stealth}}}}},
    arc/.style={thick,midarrow},
     highlight/.style={draw, line width=5pt, opacity=0.4, line cap=round}
}

\begin{tikzpicture}

% Sources
\node[dot] (s1) at (0,4) {}; 
\node[draw, scale=0.75] at ($(s1)+(-0.3,0.2)$) {$+5$};
\node[lab, above] at ($(s1)+(0.2,0.2)$) {$s^1$};

\node[dot] (s2) at (0,0) {}; 
\node[draw, scale=0.75] at ($(s2)+(-0.35,0.2)$) {$+15$};
\node[lab, above]at  ($(s2)+(0.2,0.2)$) {$s^2$};

% Middle Nodes
\node[dot] (v1) at (2,2) {}; 
\node[lab, left]at ($(v1)+(-0.1,0)$) {$v$};
\node[dot] (v2) at (6,2) {}; 
\node[lab, right]at ($(v2)+(0.1,0)$) {$w$};

\node[dot] (u) at (4,4){};
\node[dot] (l) at (4,0){};

% Sinks
\node[dot] (t1) at (8,4) {};  
\node[draw, scale=0.75] at ($(t1)+(0.3,0.2)$) {$-8$};
\node[lab, above]at ($(t1)+(-0.1,0.2)$) {$t^1$};

\node[dot] (t2) at (8,0) {};
\node[draw, scale=0.75] at ($(t2)+(0.35,0.2)$) {$-12$};
\node[lab, above]at ($(t2)+(-0.1,0.2)$) {$t^2$};

% Arcs Network

\draw[arc] (s1) -- node[lab,above] {$5$}  (v1);
\draw[arc] (s2) -- node[lab,above] {$15$} (v1);

\draw[arc] (v1) -- node[lab,above] {$10$} (u);
\draw[arc] (v1) -- node[lab,above] {$10$} (l);

\draw[arc] (u) to[bend left=45] node[lab,above] {$7$} (v2);
\draw[arc] (u) to[bend right=45] node[lab,above] {$4$} (v2);
\draw[arc] (l) to[bend left=45] node[lab,above] {$9$} (v2);
\draw[arc] (l) to[bend right=45] node[lab,above] {$2$} (v2);

\draw[arc] (v2)-- node[lab,above] {$8$} (t1);
\draw[arc] (v2)-- node[lab,above] {$12$} (t2);

% Colors

% s1 -> v1
\draw[highlight,color=magenta] (0,4.1) -- (2,2.1);
\draw[highlight,color=yellow] (0,3.9) -- (2,1.9);

% s2 -> v1
\draw[highlight,color=cyan] ($ (s2)-(0,0.1) $) -- ($ (v1)-(0,0.1) $);
\draw[highlight,color=green] ($ (s2)+(0,0.1) $) -- ($ (v1)+(0,0.1) $);

% v1 -> u
\draw[highlight,color=magenta] ($ (v1)+(0,0.1) $) -- ($ (u)+(0,0.1) $);
\draw[highlight,color=cyan] ($ (v1)-(0,0.1) $) -- ($ (u)-(0,0.1) $);

% v1 -> l
\draw[highlight,color=yellow] ($ (v1)-(0,0.1) $) -- ($ (l)-(0,0.1) $);
\draw[highlight,color=green] ($ (v1)+(0,0.1) $) -- ($ (l)+(0,0.1) $);

% u -> v2 
\draw[highlight,color=cyan] ($ (u)+(0,0.1) $) to[bend left=45] ($ (v2)+(0,0.1) $);
\draw[highlight,color=magenta] ($ (u)-(0,0.1) $) to[bend right=45] ($ (v2)-(0,0.1) $);

% l -> v2 
\draw[highlight,color=green] ($ (l)+(0,0.1) $) to[bend left=45] ($ (v2)+(0,0.1) $);
\draw[highlight,color=yellow] ($ (l)-(0,0.1) $) to[bend right=45] ($ (v2)-(0,0.1) $);

% v2 -> t1
\draw[highlight,color=cyan] ($ (v2)+(0,0.1) $) -- ($ (t1)+(0,0.1) $);
\draw[highlight,color=yellow] ($ (v2)-(0,0.1) $) -- ($ (t1)-(0,0.1) $);

% v2 -> t2
\draw[highlight,color=green] ($ (v2)+(0,0.1) $) -- ($ (t2)+(0,0.1) $);
\draw[highlight,color=magenta] ($ (v2)-(0,0.1) $) -- ($ (t2)-(0,0.1) $);

\end{tikzpicture}
\caption{A non-integral unsplittable transshipment, formed by the~$s^1$-$t^1$-path with flow value~$\nicefrac{3}{2}$ highlighted in \emph{yellow}, the~$s^1$-$t^2$-path with flow value~$\nicefrac{7}{2}$ highlighted in \emph{magenta}, the~$s^2$-$t^1$-path with flow value~$\nicefrac{13}{2}$ highlighted in \emph{blue}, and the~$s^2$-$t^2$-path with flow value~$\nicefrac{17}{2}$ highlighted in \emph{green}.}
\label{fig:int_UT_instance_frac}
\end{figure}

\section{Tightness of upper bound in highly connected instances}\label{app:sec:tightness}
            
We show that~$d_{\max}$ remains tight even when each sink is connected to an arbitrarily large number ($k+1$) of sources, by constructing the following family of instances parameterized by any integer~$q \geq 3$ and~$q>(k+1)$; see Figure~\ref{fig:d_max-LB_arb_alt} for an illustration.

Our instance consists of~$q$ sinks~$t^1,\ldots,t^q$, each with demand~$d_j=1$. For each~$j\in\{1,\ldots,q\}$, there is a dedicated source~$s^j$ connected to~$t^j$ by~$q-k$ parallel arcs. In addition, there are~$k$ sources~$s^{q+1},\ldots, s^{q+k}$, each connected to every sink~$t^j$ by a path of length~$2$, which consists of an arc~$a_i$  shared across all sinks and one arc entering~$t^j$. The balance values of the sources and the sinks can be found in Figure~\ref{fig:d_max-LB_arb_alt}. We now define a feasible fractional~$b$-transshipment as follows: every arc entering a sink carries flow~$\nicefrac{1}{q}$. Thus, each sink receives a total flow of~$1$, and each shared arc~$a_i$ with~$i\in\{1,\ldots,k\}$ carries a total flow of~$q\cdot\nicefrac{1}{q}=1$.

Consider any unsplittable~$b$-transshipments of our instance. Flow from source~$s^j$ can only be routed to sink~$t^j$, and unsplittability requires to route its entire flow value along one arc. This yields a violation of~$1-\nicefrac{(k+1)}{q}$, which converges to~$d_{\max}=1$ as~$q\to\infty$. Thus, even when each sink is connected to arbitrarily many sources, no unsplittable transshipment can avoid a violation of~$d_{\max}$, establishing the tightness of the bound.

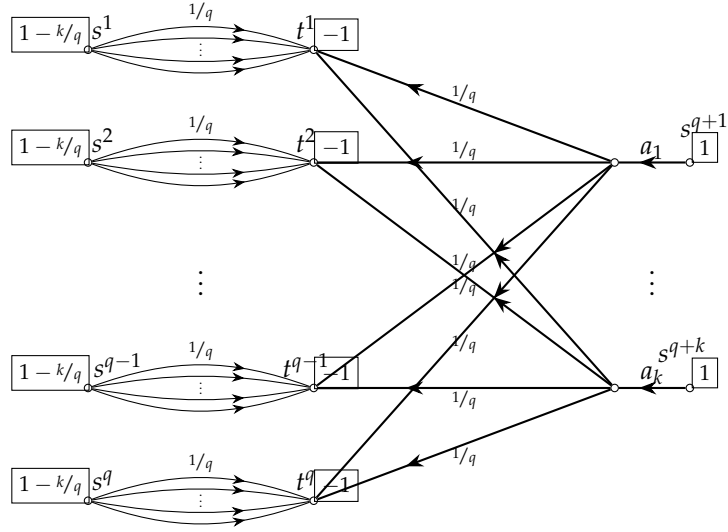
\begin{figure}[t]     
    \centering
    \tikzset{
    dot/.style={circle,draw,inner sep=1pt},
    lab/.style={draw=none,font=\small,inner sep=1pt},
    arc/.style={thick,midarrow},
     arcx/.style={thick,midarrowx},
    midarrow/.style={postaction={decorate,decoration={markings,mark=at position 0.7 with {\arrow{Stealth}}}}},
    midarrowx/.style={postaction={decorate,decoration={markings,mark=at position 0.4 with {\arrow{Stealth}}}}}
}

\begin{tikzpicture}

% Source-sink pairs

\foreach \j/\y in {1/6,2/4.5,3/1.5,4/0} {
    \node[dot]  (s\j) at (2,\y) {};
    \node[draw, scale=0.75] at ($(s\j)+(-0.5,0.2)$) {$1-\nicefrac{k}{q}$};
    \node[dot] (t\j) at (5,\y) {};   
    \node[draw, scale=0.75] at ($(t\j)+(0.3,0.2)$) {$-1$};
    % q-k parallel arcs of capacity 1/q
    \foreach \i in {-20,-10,10,20} {
        \draw[midarrow,bend left=\i] (s\j) to (t\j);
    } 
    \node[lab, scale=0.5] at (3.5,\y+0.05) {$\vdots$};  
    \node[lab, scale=0.75] at (3.5,\y+0.5) {$\nicefrac{1}{q}$};
}

\node[lab] at (3.5,3) {$\vdots$};

\node[lab,above]  at  ($(s1)+(0.2,0.1)$) {$s^{1}$};
\node[lab,above]  at  ($(s2)+(0.2,0.1)$) {$s^{2}$};
\node[lab,above]  at  ($(s3)+(0.4,0.1)$) {$s^{q-1}$};
\node[lab,above]  at  ($(s4)+(0.2,0.1)$) {$s^{q}$};

\node[lab,above]  at ($(t1)+(-0.1,0.1)$) {$t^{1}$};
\node[lab,above]  at ($(t2)+(-0.1,0.1)$) {$t^{2}$};
\node[lab,above]  at ($(t3)+(-0.1,0.1)$) {$t^{q-1}$};
\node[lab,above]  at ($(t4)+(-0.1,0.1)$) {$t^{q}$};

%\node[rectangle,right] at ($(t\j.east)+(0.5,0)$) {$d_{t^{\j}}=1$};
    % single shared capacity label for the bundle
%    \node[lab, font=\tiny, inner sep=1pt] at ($(s\j)!0.5!(t\j)+(0,0.3)$) {$1/q$};

% Auxiliary sources

\node[dot] (sa1) at (10,4.5) {};
\node[lab,above]  at  ($(sa1)+(0.2,0.3)$) {$s^{q+1}$};
 \node[draw, scale=0.75] at ($(sa1)+(0.2,0.2)$) {$1$};

\node[dot] (sak) at (10,1.5) {};
\node[lab,above]  at  ($(sak)+(-0.1,0.3)$) {$s^{q+k}$};
 \node[draw, scale=0.75] at ($(sak)+(0.2,0.2)$) {$1$};

\node[lab] at (9.5,3) {$\vdots$};

% Shared arcs a_i

\node[dot] (v1) at (9,4.5) {};
\node[dot] (vk) at (9,1.5) {};

\draw[arc] (sa1) -- node[lab,above] {$a_1$}  (v1);
\draw[arc] (sak) -- node[lab,above] {$a_{k}$}  (vk);

% Length-2 paths to all sinks

\foreach \j in {1,2} {
    \draw[arc] (v1) -- node[lab,above,scale=0.75] {$\nicefrac{1}{q}$}  (t\j);
    \draw[arcx] (vk) -- node[lab,above, scale=0.75] {$\nicefrac{1}{q}$}  (t\j);
}
\foreach \j in {3,4} {
    \draw[arc] (vk) -- node[lab,below,scale=0.75] {$\nicefrac{1}{q}$}  (t\j);
    \draw[arcx] (v1) -- node[lab,below, scale=0.75] {$\nicefrac{1}{q}$}  (t\j);
}

\end{tikzpicture}
    \caption{An instance to show~$d_{\max}$ is a tight lower bound for the required capacity violation.}
    \label{fig:d_max-LB_arb_alt}   
\end{figure}

\section{Proof of Theorem~\ref{thm:tree}}
\label{app:sec:number_of_paths}
 
To prove Theorem~\ref{thm:tree}, we need to specify Step~\ref{step:singular} of Algorithm~\ref{alg:modifiedDGG} in more detail. Consider the situation where the algorithm finds a singular digraph rooted at the junction vertex~$w$, and~$w$ has been reached on a forward path along its incoming arc~$a'$. Moreover, let~$t$ be a (sub-)sink located at~$w$ with~$d_t > y_{a'}$.

By Lemma~\ref{lem:singular-digraph}, the singular digraph with the super-source~$s^*$ removed forms an in-tree rooted at~$w$. The flow~$y$ on this in-tree satisfies the demands of all (sub-)sinks located at its vertices, with the sole exception of the sink~$t$, for which only a demand of~$d_t - y_{a'}$ is satisfied within the in-tree, while the remaining demand~$y_{a'}$ is supplied via arc~$a'$.

We proceed iteratively as follows. Starting from a leaf~$s\in S^+$ of the in-tree, we follow its outgoing arc~$a$ and continue along the unique path toward the root~$w$ until a (sub-)sink~$t'$ is encountered. If~$t'$ is located at the root~$w$, we select~$t' = t$ only if no other (sub-)sink is located at~$w$. If~$d_{t'}\leq y_a$, the entire demand~$d_{t'}$ can be routed along this path. In this case, we move~$t'$ backward along the path to~$s$ and further to the super-sink~$s^*$, reducing the flow on all arcs of the path by~$d_{t'}$. Otherwise, if~$d_{t'} > y_a$, we split~$t'$ into two sub-sinks~$t'_1$ and~$t'_2$ with demands~$d_{t'_1} \coloneqq d_{t'} - y_a$ and~$d_{t'_2} \coloneqq y_a$, respectively. The sub-sink~$t'_2$ is then moved backward to~$s$ and further to the super-sink~$s^*$, and the flow on all arcs of the path is reduced by~$y_a$. As a consequence, the arc~$a$ is deleted and~$s$ has no remaining outgoing arcs. We thus delete~$s$ from the in-tree. 

This process is repeated until all flow on the singular digraph has been eliminated and only the sub-sink~$t_1$ remains at the root~$w$, after which it is moved backward along the arc~$a'$.

With this additional specification of the Modified DGG Algorithm, we are now ready to prove Theorem~\ref{thm:tree}.

\begin{proof}[Proof of Theorem~\ref{thm:tree}.]
We prove the result by analyzing how the bipartite graph~$G$ evolves during the execution of the Modified DGG Algorithm. Whenever a sink~$t\in S^-$, or one of its sub-sinks, reaches a source~$s \in S^+$ along its backward path to the super-source~$s^*$, we add the edge~$\{s,t\}$ to~$G$.

A source~$s \in S^+$ is called \emph{active} as long as it has at least one outgoing arc. Similarly, a sink~$t \in S^-$ is called \emph{active} as long as~$t$, or one of its sub-sinks, has not yet reached the super-source~$s^*$. Note that at any time there is at most one (sub-)sink of~$t$ that has not reached the super-source. Indeed, whenever a (sub-)sink is split into two sub-sinks, one of them is immediately moved along a backward path within a singular digraph, whose arcs are then deleted.

We prove that the following invariants are maintained throughout the execution of the Modified DGG Algorithm:
\begin{enumerate}
\item[(*)]~$G$ is acyclic and every connected component of~$G$ contains at most one active source or sink.
\end{enumerate}
After the preliminary phase of the algorithm, some sinks may have reached the super-source~$s^*$ via a source in~$S^+$. These sinks are inactive and have exactly one incident edge in~$G$, connecting them to their corresponding source in~$S^+$. All remaining sinks are still active and isolated in~$G$. In particular, the invariant~(*) is satisfied at this point.

Throughout the execution of the algorithm, the bipartite graph changes only when a sink~$t$, or one of its (sub-)sinks, reaches a source~$s \in S^+$ and is then immediately moved further backward to the super-source. Prior to this step, both~$t$ and~$s$ are active and therefore belong to distinct connected components of~$G$, which are merged by the addition of the edge~$\{s,t\}$. In particular,~$G$ is still acyclic. It only remains to show that~$s$ or~$t$ are inactive after this step such that the new connected component of~$G$ formed by edge~$\{s,t\}$ contains at most one active source or sink.

If~$t$ has become inactive, we are done. Otherwise,~$t$ was split into two sub-sinks~$t_1$ and~$t_2$ in the previous step, and only~$t_2$ was moved to the super-sink~$s^*$ via a source~$s \in S^+$. In this case, the demand of~$t_2$ was set equal to the flow value~$y_a$ of the unique outgoing arc~$a$ of~$s$, so that~$a$ was deleted and~$s$ consequently became inactive.
\end{proof}
        
\section{Routing in a constant number of rounds}
\label{app:finite-rounds_routing}

\paragraph{Routing in six rounds when~$d_{\max} \leq \nicefrac{c_{\min}}{3}$.}
As in Section~\ref{subsec:number_of_rounds}, construct an auxiliary graph~$D'$ with~$q=2$ copies of~$D$, each carrying half of the original flow. Applying the Modified DGG Algorithm yields a feasible routing in each copy, since~$\nicefrac{c_a}{2}+d_{\max}<c_a$. These two copies directly serve as two routing rounds for all non-critically split sinks.
        
Each critically split sink is labeled by the copy containing its forward path(s). For sinks with the same label, we further distinguish two cases according to how much of its demand is routed along the singular digraph: \emph{Type~$1$}, at least half, i.e.~$\ge \nicefrac{d_t}{2}$; and \emph{Type~$2$}, less than half, i.e.~$< \nicefrac{d_t}{2}$. This produces four groups in total (two labels times two types). We route each group in a separate additional round, yielding four extra rounds and six rounds overall.
        
\emph{Feasibility.} Consider a group labeled by Copy~$1$ (the case of Copy~$2$ is symmetric). In its routing round, the singular digraph (lying in Copy~$2$) contributes at most~$d_{\max}$ flow per arc for Type~$1$ and at most~$\nicefrac{d_{\max}}{2}$ for Type~$2$, while the forward paths (lying in Copy~$1$) contribute at most~$\nicefrac{c_a}{2}+\nicefrac{d_{\max}}{2}$ for Type~$1$ and~$\nicefrac{c_a}{2}+d_{\max}$ for Type~$2$. In either case, the total flow on any arc~$a$ is bounded by
\[
\nicefrac{c_a}{2}+\nicefrac{3d_{\max}}{2} \;\le\; c_a,
\]
since~$d_{\max}\le \nicefrac{c_{\min}}{3}\le \nicefrac{c_a}{3}$. Hence, each additional round is feasible. Therefore, if $d_{\max}\le \nicefrac{c_{\min}}{3}$, then all demands can be routed unsplittably in six rounds.

\paragraph{Routing in four rounds when~$d_{\max} \leq \nicefrac{c_{\min}}{4}$.}
As above, we build the auxiliary graph~$D'$ from two copies of~$D$ and run the Modified DGG algorithm. Sinks that are not critically split are already routed feasibly in one of the two copies (rounds), since~$\nicefrac{c_a}{2} + d_{\max} \le \nicefrac{c_a}{2} + \nicefrac{c_{\min}}{4} < c_a$.

Each critically split sink is labeled by the copy of~$D$ containing its forward path(s). Critically split sinks with the same label are grouped, and their paths are superimposed to form two additional rounds. No finer grouping is required for feasibility as~$\nicefrac{c_a}{2} + d_{\max} + d_{\max} \leq \nicefrac{c_a}{2} + \nicefrac{c_{\min}}{2} \leq c_a$.

\end{appendices}     
\end{document}